\newtheorem{lemma}{Lemma}
\newtheorem{theorem}{Theorem}
\theoremstyle{plain}
\newtheorem*{claim}{Claim}
\newcommand{\scalePart}{4z(7z + 1)}
\newcommand{\jobs}{J}
\newcommand{\sched}{\sigma}
\title{Complexity and Inapproximability Results for Parallel Task Scheduling and Strip Packing\thanks{This work was partially supported by German Research Foundation (DFG) project JA 612 /14-2.}}
\author{Sören Henning, Klaus Jansen, Malin Rau, Lars Schmarje\\
Institut für Informatik, Christian-Albrechts-Universität zu Kiel, Germany\\
\texttt{$\{$stu114708,kj,mra,stu115194$\}$@informatik.uni-kiel.d}e}
\begin{document}

\maketitle

\begin{abstract}
We study the Parallel Task Scheduling problem $Pm|size_j|C_{\max}$ with a constant number of machines.
This problem is known to be strongly NP-complete for each $m \geq 5$, while it is solvable in pseudo-polynomial time for each $m \leq 3$.
We give a positive answer to the long-standing open question whether this problem is strongly $NP$-complete for $m=4$. 
As a second result, we improve the lower bound of $\frac{12}{11}$ for approximating pseudo-polynomial Strip Packing to $\frac{5}{4}$. Since the best known approximation algorithm for this problem has a ratio of $\frac{4}{3} + \varepsilon$, this result narrows the gap between approximation ratio and inapproximability result by a significant step.
Both results are proven by a reduction from the strongly $NP$-complete problem 3-Partition.

\end{abstract}
\section{Introduction}
In the Parallel Task Scheduling problem, we have given $m$ machines and a set of jobs $\jobs$. Each job $j \in J$ has a processing time $p(j) \in \mathbb{N}$ and a number of required machines $q(j) \in \mathbb{N}$. A schedule $\sched$ is a combination of two functions $\sched: \jobs \rightarrow \mathbb{N}$ and $\rho: \jobs \rightarrow \{M | M \subseteq \{1,\dots, m\} \}$. The function $\sched$ maps each job to a start point in the schedule, while $\rho$ maps each job to the set of machines it is processed on. 
A schedule is feasible if each machine processes at most one job at the time and each job is processed on the required number of machines.
The objective is to find a feasible schedule $\sched$ minimising the makespan $T := \max_{i \in \jobs} \sched(i)+p(i)$. 
This problem is denoted with $P|size_j|C_{\max}$. If the number of machines is constant we write $Pm|size_j|C_{\max}$. For a given job $j \in \jobs$ we define its work as $w(j) := p(j) \cdot q(j)$. For a subset $\jobs' \subseteq \jobs$ we define its total work as $w(\jobs') := \sum_{j \in \jobs'} w(j)$.

In the Strip Packing problem we have given a strip with a width $W \in \mathbb{N}$ and infinite height as well as a set of rectangular items $I$. Each item $i \in I$ has a width $w_i \in \mathbb{N}_{\leq W}$ and a height $h_i \in \mathbb{N}$. 
The objective is to find a feasible packing of the items $I$ into the strip, which minimizes the packing height. 
A \textit{packing} of the items $I$ into the strip is a function $\rho: I \rightarrow \mathbb{Q}_0 \times \mathbb{Q}_{0}$, which assigns the left bottom corner of an item to a position in the strip, such that for each item $i \in I$ with  $\rho(i) = (x_i,y_i)$ we have $x_i + w_i \leq W$. 
An \textit{inner} point of $i$ is a point from the set $inn(i) := \{(x,y)\in \mathbb{R}\times \mathbb{R}| x_i < x < x_i +w_i, y_i < y < y_i + h_i \}$. 
We say two items $i,j \in I$ \textit{overlap} if they share an inner point (i.e if $inn(i) \cap inn(j) \not = \emptyset$).
A packing is \textit{feasible} if no two items overlap. The height of a packing is defined as $H := \max_{i \in I} y_i + h_i$. 

A well known and interesting fact is that, in this setting, we can transform feasible packings to packings where all positions are integral, without enlarging the packing height \cite{BansalCKS06}. This can be done by shifting all items downwards until they touch the upper border of an item or the bottom of the strip. Now all $y$-coordinates of the items are integral since each is given by the sum of some item heights, which are integral. 
The same can be done for the $x$-coordinate by shifting all items to the left as far as possible. Therefore we can assume that we have packings of the form $\rho: I \rightarrow \mathbb{N}_0 \times \mathbb{N}_{0}$.

Strip packing is closely related to Parallel Task Scheduling. If we demand that jobs from the Parallel Task Scheduling are processed on contiguous machines, the resulting problem is equivalent to the Strip Packing problem. Although these problems are closely related, there are some instances that have a smaller optimal value on non-contiguous machines than on contiguous machines \cite{TurekWY92}.

\paragraph*{Related Work}
\subparagraph*{Parallel Task Scheduling:}
In 1989 Du and Leung \cite{DuL89a} proved the problem $Pm|size_j|C_{\max}$ to be strongly $NP$-complete for all $m \geq 5$, while $Pm|size_j|C_{\max}$ is solvable by a pseudo-polynomial algorithm for all $m \leq 3$. 
Amoura et al. \cite{AmouraBKM02}, as well as Jansen and Porkolab \cite{JansenP02}, presented a polynomial time approximation scheme (in short PTAS) for the case that $m$ is a constant. A PTAS is a family of algorithms that finds a solution with an approximation ratio of $(1 +\varepsilon)$ for any given value $\varepsilon > 0$.
If $m$ is polynomially bounded by the number of jobs, a PTAS still exists \cite{JansenT10}. Nevertheless, if $m$ is arbitrarily large, the problem gets harder. By a simple reduction from the Partition problem, one can see that there is no polynomial algorithm with approximation ratio smaller than $\frac{3}{2}$. Furthermore, there is no asymptotic algorithm with approximation ratio $\alpha \mathrm{OPT} + \beta$, where $\alpha < 3/2$ and $\beta$ polynomial in $n$ \cite{Johannes06}.
Parallel Task Scheduling with arbitrarily large $m$ has been widely studied \cite{GareyG75,TurekWY92,LudwigT94,FeldmannST94}.
The algorithm with the best known absolute approximation ratio of $\frac{3}{2}+\varepsilon$ was presented by Jansen \cite{Jansen12}.

\subparagraph*{Strip Packing:}
The Strip Packing problem was first studied in 1980 by Baker et al. \cite{BakerCR80}. They presented an algorithm with an absolute approximation ratio of 3. 
This ratio was improved by a series of papers \cite{CoffmanGJT80,Sleator80,Schiermeyer94,Steinberg97,HarrenS09}. The algorithm with the best known absolute approximation ratio by Harren, Jansen, Prädel and van Stee \cite{harren20145} achieves a ratio of $\frac{5}{3} +\varepsilon$. By a simple reduction from the Partition problem, one can see that it is impossible to find an algorithm with better approximation ratio than $\frac{3}{2}$, unless $P = NP$. 

However, asymptotic algorithms can achieve asymptotic approximation ratios better than $\frac{3}{2}$ and have been studied in various papers \cite{CoffmanGJT80,Golan,BakerBK81}. Kenyon and Rémila \cite{Kenyon00} presented an asymptotic fully polynomial approximation scheme (in short AFPTAS) with additive term $\mathcal{O}(h_{\max}/\varepsilon^2)$, where $h_{max}$ is the largest occurring item height. An approximation scheme is fully polynomial if its running time is polynomial in $1/\varepsilon$ as well.
This algorithm was simultaneously improved by Sviridenko \cite{Sviridenko12} and Bougeret et al. \cite{BougeretDJRT2011} to an algorithm with an additive term of $\mathcal{O}(\log(1/\varepsilon)/\varepsilon)h_{\max}$. 
Furthermore, at the expense of the running time, Jansen and Soils-Oba \cite{JansenS09} presented an asymptotic PTAS with an additive term of $h_{\max}$.

Recently the focus shifted to pseudo-polynomial algorithms.
Jansen and Thöle \cite{JansenT10} presented an pseudo-polynomial algorithm with approximation ratio of $\frac{3}{2} + \varepsilon$. Since the partition problem is solvable in pseudo-polynomial time, the lower bound of $\frac{3}{2}$ for polynomial time Strip Packing can be beaten by pseudo-polynomial algorithms. The first such algorithm with a better approximation ratio than $\frac{3}{2}$ was given by Nadiradze and Wiese \cite{NadiradzeW16}. It has an absolute approximation ratio of $\frac{7}{5} + \varepsilon$. Its approximation ratio was independently improved to $\frac{4}{3} + \varepsilon$ by Galvez, Grandoni, Ingala, and Khan \cite{GalvezGIK16} and by Jansen and Rau \cite{JansenR16}. All these algorithms have a polynomial running time if the width of the strip $W$ is bounded by a polynomial in the number of items. 

In contrast to Parallel Task Scheduling, Strip Packing can not be approximated arbitrarily close to 1, if we allow pseudo-polynomial running time. Adamaszek, Kociumaka, Pilipczuk and Pilipczuk \cite{Adamaszek16} proved this by presenting a lower bound of $\frac{12}{11}$. This result also implies that Strip Packing admits no quasi-polynomial time approximation scheme, unless $NP \subseteq DTIME(2^{\mathrm{polylog}(n)})$.
Christensen, Khan, Pokutta, and Tetali \cite{Christensen2017} list 10 major open problems related to multidimensional Bin Packing. As the 10th problem they name pseudo polynomial Strip Packing and underline the importance of finding tighter pseudo-polynomial time results for lower and upper bounds.

\paragraph*{New Results}

In this paper, we present two hardness results. The first result answers the long-standing open question whether the problem $P4|size_j|C_{\max}$ is strongly $NP$-complete.  

\begin{theorem}
\label{thm:SchedulingHardness}
The Parallel Tasks Scheduling problem on $4$ machines $P4|size_j|C_{max}$ is strongly NP-complete.
\end{theorem}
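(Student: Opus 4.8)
Membership in $NP$ is immediate: a schedule given by the start times $\sigma$ together with the machine assignment $\rho$ is a certificate of polynomial size whose feasibility (no machine runs two jobs at once, every job receives $q(j)$ machines) can be checked in polynomial time. The reduction sketched below starts from $3$-Partition, which is strongly $NP$-complete, and only multiplies the given numbers by a factor polynomial in the size of the $3$-Partition instance; hence every number occurring in the constructed scheduling instance stays bounded by a polynomial in the (unary) input length, so the reduction actually witnesses \emph{strong} $NP$-completeness.

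So the plan is to reduce from $3$-Partition: given $3z$ integers $a_1,\dots,a_{3z}$ with $B/4<a_i<B/2$ for all $i$ and $\sum_i a_i=zB$, decide whether $\{1,\dots,3z\}$ can be partitioned into $z$ subsets each with $a$-sum exactly $B$ (the size restriction forces every such subset to contain exactly three indices). From this I would build an instance of $P4|size_j|C_{\max}$ together with a target makespan $T$ such that a schedule of makespan $T$ exists if and only if the $3$-Partition instance is a yes-instance; this makes the decision version of $P4|size_j|C_{\max}$ $NP$-hard, and with the paragraph above it is strongly $NP$-complete.

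\emph{The construction.} Scale all lengths by the factor $\mu:=\scalePart$ --- the precise constant is dictated by the frame below and is not important for the strategy, but note that $\mu$ is divisible by $4$ and by $z$ --- and turn each $a_i$ into a ``slot job'' of machine requirement $1$ and processing time $\mu a_i$. All other jobs are ``frame jobs'' of machine requirements $2$, $3$, $4$ and fixed lengths; their purpose is to fill the $4\times T$ grid so tightly (the total work being essentially $4T$, so an optimal schedule leaves no idle time) that the only remaining freedom is \emph{where} the slot jobs go. The frame must leave exactly $z$ free regions, each consisting of one machine for $\mu B$ units of time, arranged in a \emph{staggered} pattern so that consecutive free regions sit on \emph{different} machines. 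The rigidity comes from the facts that a size-$4$ frame job serialises everything and that two size-$3$ frame jobs can never run in parallel (that would need $6>4$ machines); the staggering is essential, because two free regions on the same machine could merge into one region of length $2\mu B$, and such a region could be filled by $5$, $6$ or $7$ slot jobs summing to $2\mu B$ even for a no-instance, which would destroy the reduction. (Because machines need not be contiguous, a size-$3$ frame job may use any three of the four machines, giving the frame extra configurations to rule out and thus making the rigidity analysis harder, not easier, than in the contiguous/Strip-Packing setting.)

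\emph{Correctness.} The easy direction: from a valid $3$-Partition, place the frame in the intended pattern and drop the three slot jobs of each triple into the corresponding free region --- they fit exactly, their scaled $a$-sum being $\mu B$ --- obtaining makespan $T$. The hard direction is the substance of the proof: assuming only that \emph{some} schedule of makespan $T$ exists, one must show the frame jobs are forced (up to the obvious symmetries) into their intended positions, so that the slot jobs really split into $z$ groups, one per free region, each occupying one machine for exactly $\mu B$ time; the restriction $\mu B/4<\mu a_i<\mu B/2$ then forces each group to be a triple with $a$-sum exactly $B$, recovering the partition. I expect this rigidity analysis --- ruling out, with only four machines, every way the frame and slot jobs could reshuffle (shifted frame jobs, merged or split free regions, slot jobs straddling a would-be separator) --- to be the main obstacle, and this is presumably exactly where the case $m=4$ resisted while $m\ge5$, with a fifth machine to spend on stiffening the frame, had long been settled.
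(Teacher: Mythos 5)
There is a genuine gap: what you have written is a plan for a reduction, not a proof. The entire mathematical content of the theorem lies in exhibiting a concrete set of ``frame'' jobs and then proving the rigidity you yourself identify as ``the main obstacle'' --- namely that \emph{every} schedule of makespan $T$ forces the frame into (essentially) its intended pattern, so that the slot jobs split into $z$ groups of total length $D$ each. You assert that such a frame with staggered free regions exists and is rigid, but you neither construct it nor argue rigidity, and it is far from obvious that it can be done with only four machines; this is exactly why the case $m=4$ was open while $m\ge 5$ was not. In the paper this step occupies all of Section~\ref{sec:HardnessScheduling}: the structure jobs have processing times that are carefully chosen polynomials in $D$ (sets $A,B$ of $3$-processor jobs, $a,b,c$ of $2$-processor jobs, $\alpha,\beta,\gamma,\delta,\lambda$ of $1$-processor jobs), and the rigidity argument proceeds by reading the coefficients of the powers of $D$ in every start time (Table~\ref{table:coefficients}), deriving counting identities (\ref{eq:A1})--(\ref{eq:c1}), and then proving Lemmas~\ref{lma:JobsOnMachines}--\ref{lma:AeqB} plus a final ordering lemma to pin down which jobs sit on which machine and in which order, leaving exactly $z$ gaps of length $D$. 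None of that machinery, nor any substitute for it, appears in your proposal, so the ``hard direction'' of the iff is unproven.

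Beyond the missing core, some details of your sketched frame do not match any construction known to work. You posit frame jobs of machine requirement $4$ (``a size-$4$ frame job serialises everything''); the paper's construction uses no $4$-processor jobs at all, and its rigidity comes instead from the interplay of the $3$-processor jobs in $A\cup B$ (no two of which can overlap) with the $2$- and $1$-processor jobs whose lengths encode distinct digits of $D$. Likewise, your free regions of length $\mu B$ and your divisibility remarks about $\mu=\scalePart$ play no role in the actual argument: the paper scales the 3-Partition numbers only to guarantee $D>\scalePart$, so that coefficients of the $D^i$ never carry over, which is what makes the digit-by-digit bookkeeping sound. The easy direction and the NP-membership/strongly-NP remarks in your proposal are fine, but as it stands the reduction's correctness is asserted rather than established.
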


The second result concerns pseudo-polynomial Strip Packing. We manage to adapt our reduction for $P4|size_j|C_{\max}$ to Strip Packing, by transforming the optimal schedule into a packing of rectangles interpreting the makespan as the width of the strip. 
This adaptation leads to the following result:

\begin{theorem}
\label{thm:StripPackingHardness}
For each $\varepsilon >0$ it is NP-Hard to approximate Strip Packing with a ratio of $\frac{5}{4} - \varepsilon$ in pseudo-polynomial time.
\end{theorem}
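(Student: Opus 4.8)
The plan is to reuse the combinatorial core of the reduction behind Theorem~\ref{thm:SchedulingHardness}, but to reinterpret a schedule on $4$ machines as a packing into a strip in which the target makespan plays the role of the (fixed) strip width and the four machines the role of the height direction. Start from a 3-Partition instance $(a_1,\dots,a_{3n};B)$; by strong $NP$-completeness of 3-Partition we may assume all $a_i$ and $B$ are bounded by a polynomial in $n$, and after a routine normalization we may assume whatever divisibility and range conditions the construction needs. Fix a scaling parameter $N$ (a constant, pinned down at the very end), build a set of axis-parallel rectangles, and set the strip width to $W$, the makespan target of the associated scheduling instance. Since that target is polynomial in $n$ and $B$, $W$ is polynomial in the size of the 3-Partition instance, so any \emph{pseudo-polynomial}-time algorithm for the resulting Strip Packing instances runs in genuinely polynomial time in the original input.

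The items split into a rigid \emph{frame} and the \emph{partition items}. For each $a_i$ the partition items contain one rectangle of width proportional to $a_i$ and of a common fixed height, so that a family of them fits side by side into a free slot of width $B$ exactly when the corresponding $a_i$'s sum to $B$. The frame is a family of wide rectangles engineered so that: (a) it has one distinguished packing of height $4N$ whose free space is precisely $n$ rectangular slots, each of width $B$ and of the height needed to host one 3-Partition triple; and (b) \emph{every} packing of the frame of height strictly below $5N$ is, up to irrelevant symmetries, this distinguished one. Granting (a) and (b), both directions follow at once: a yes-instance of 3-Partition yields a height-$4N$ packing (place the frame canonically, drop the $i$-th triple into the $i$-th slot), while in any packing of height $<5N$ property (b) forces the canonical frame, hence confines the partition items to the $n$ width-$B$ slots, which is possible only if the $a_i$ split into triples of sum $B$. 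Thus the optimal packing height is at most $4N$ on yes-instances and at least $5N$ on no-instances.

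The heart of the argument --- and the point where the gadget must be sharper than the $\tfrac{12}{11}$-construction of Adamaszek, Kociumaka, Pilipczuk and Pilipczuk \cite{Adamaszek16} --- is part (b): showing that no packing of height below $5N$ can afford to deviate from the canonical frame layout. I expect this to require a careful area- and width-counting argument combined with a case analysis over the admissible horizontal offsets and vertical levels of the large frame rectangles, ruling out shifted or interleaved layouts that would open up free regions of the wrong width or height. A secondary point to check is that Strip Packing forbids spreading a single rectangle over non-adjacent rows --- unlike the non-contiguous scheduling problem of Theorem~\ref{thm:SchedulingHardness} --- which here only constrains the adversary further; nonetheless the frame must be chosen so that contiguity creates no unintended legal layouts. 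Calibrating the frame so that the deviation threshold comes out exactly $5N$, rather than merely $4N+o(N)$, while keeping $W$ polynomial, is the main design constraint.

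Finally I would close with the standard gap argument: a pseudo-polynomial-time $(\tfrac54-\varepsilon)$-approximation applied to our instance returns, on a yes-instance, a packing of height at most $(\tfrac54-\varepsilon)\cdot 4N = 5N-4\varepsilon N$, which is $<5N$ as soon as $N$ is chosen with $4\varepsilon N\ge 1$ (harmless, since all heights are integral), whereas on a no-instance it returns height at least $5N$. Such an algorithm would therefore decide 3-Partition in polynomial time, so $P=NP$.
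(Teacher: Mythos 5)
Your overall skeleton matches the paper's: reuse the 3-Partition reduction behind Theorem~\ref{thm:SchedulingHardness}, turn each job $j$ into an item of width $p(j)$ and height $q(j)$, let the scheduling makespan $W$ be the strip width (pseudo-polynomially large, so a pseudo-polynomial algorithm runs in polynomial time), and finish with the gap argument $4$ versus $5$ (your scaling by $N$ is harmless but unnecessary, since $(\tfrac54-\varepsilon)\cdot 4 = 5-4\varepsilon < 5$ and any packing of a No-instance has height at least $5$ by integrality). The problem is that you leave the entire load-bearing step --- your property (b), that every packing of height below the threshold is essentially the canonical one --- as a conjecture to be established later by ``a careful area- and width-counting argument''. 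That rigidity statement is precisely the hard content of the reduction; a proof that merely postulates it is not a proof, and designing a ``frame'' for which (b) actually holds is exactly the difficulty the paper spends Section~\ref{sec:HardnessScheduling} on.

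What you miss is that no new gadget analysis is needed at all, and you were one observation away from it: as you note yourself, contiguity ``only constrains the adversary further''. Concretely, a packing of the translated items into a strip of width $W$ and height $4$ is nothing but a schedule of the original jobs on $4$ machines with makespan $W$ in which each job happens to occupy contiguous machines; hence the already-proven ``schedule to partition'' direction of Theorem~\ref{thm:SchedulingHardness} immediately gives that the 3-Partition instance is a Yes-instance, with no area counting, no case analysis over offsets, and no recalibration of thresholds. For the converse you only need that the canonical optimal schedule for a Yes-instance uses contiguous machines, which the construction of Figure~\ref{fig:packingStructure} provides (and which the final lemma of Section~\ref{sec:HardnessScheduling} establishes in general). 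So the paper's proof of Theorem~\ref{thm:StripPackingHardness} is a few lines on top of Theorem~\ref{thm:SchedulingHardness}, whereas your plan proposes to rebuild and re-analyze a rigidity gadget in the style of \cite{Adamaszek16} and never actually does so; as written, that is a genuine gap.
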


This improves the so far best lower bound of $\frac{12}{11}$ to $\frac{5}{4}$. 
In Figure \ref{fig:improvement} we display the results for pseudo-polynomial Strip Packing achieved so far.

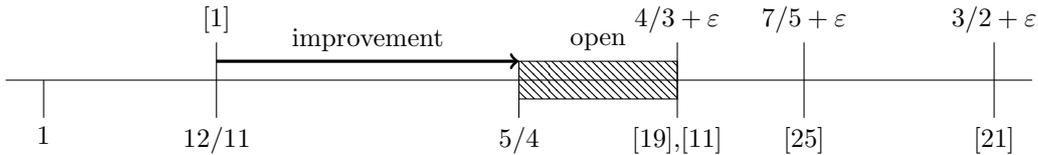
\begin{figure}
\begin{tikzpicture}
\pgfmathsetmacro{\w}{25}
\draw[-](-0.5 +\w,0)--(3/2 * \w + 0.5,0);
\draw(\w,0)--(\w,-0.5) node [below] {$1$};
\draw(12/11 * \w,0.5)node [above]{\cite{Adamaszek16}}--(12/11 * \w,-0.5) node [below] {$12/11$\tiny{}};
\draw(5/4 * \w,0)--(5/4 * \w,-0.5) node [below] {$5/4$};
\draw(4/3 * \w,- 0.5)node [below]{{\cite{JansenR16},\cite{GalvezGIK16}}} --(4/3 * \w,0.5) node [above] {$4/3+\varepsilon$};
\draw(7/5 * \w,-0.5)node [below]{{\cite{NadiradzeW16}}} --(7/5 * \w,0.5) node [above] {$7/5+\varepsilon$};
\draw(3/2 * \w,-0.5) node [below]{\cite{JansenT10}} --(3/2 * \w,0.5) node [above] {$3/2+\varepsilon$};
\draw[very thick][->](12/11 * \w,0.25) -- node[above]{improvement}(5/4 * \w,0.25);
\draw[pattern = north west lines] (5/4 * \w,-0.25) rectangle (4/3 * \w,0.25);
\node at (31/24 *\w,0.5){open};
\end{tikzpicture}
\caption{The upper and lower bounds for the best possible approximation for pseudo-polynomial Strip Packing achieved so far}
\label{fig:improvement}
\end{figure}
\paragraph*{Notation}

For a given schedule $\sched$ we define for $i \in \jobs$ and any set of jobs $\jobs' \subseteq \jobs$ the value $\#_{i} \jobs'$ as the number of jobs in $\jobs'$, which finish before $\sched(i)$ (e.i.  $\#_{i} \jobs' = |\{j \in \jobs': \sched(j) + p(j) \leq \sched(i)\}|$). 
If the job is clear from the context we write $\# \jobs'$ instead of $\#_{i} \jobs'$. 
Furthermore, we will use a notation defined in \cite{DuL89a} for swapping a part of the content of two machines.
Let $i \in \jobs$ be a job, that is processed by at least two machines $\tilde{M}$ and $\tilde{M}'$ with start point $\sched(i)$. We can swap the content of the machines $\tilde{M}$ and $\tilde{M}'$ after time $\sched(i)$ without violating any scheduling constraint. We define this swapping operation as $SWAP(\sched(i),\tilde{M},\tilde{M}')$.

\paragraph*{Organization of this Paper}

In Section \ref{sec:HardnessScheduling} we will prove that $P4|size_j|C_{\max}$ is strongly NP-complete by a reduction from the strongly NP-complete Problem 3-Partition. 
First, we describe the jobs to construct for this reduction. Afterward, we prove: if the 3-Partition instance is a Yes-instance, then there is a schedule with a specific makespan, and if there is a schedule with this specific makespan then the 3-Partition instance has to be a Yes-instance. While the first can be seen directly, the proof of the second is more involved. 
Proving the second claim, we first show that it can be w.l.o.g. supposed that each machine contains a certain set of jobs.
In the next step, we prove some implications on the order in which the jobs appear on the machines which finally leads to the conclusion that the 3-Partition instance has to be a Yes-instance.
In Section \ref{sec:HardnessStripPacking} we discuss the implications for the inapproximability of pseudo-polynomial Strip Packing.

\section{Hardness of Scheduling Parallel Tasks}
\label{sec:HardnessScheduling}
In the following, we will prove Theorem \ref{thm:SchedulingHardness} by a reduction from the 3-Partition problem. 
In the 3-Partition problem we have given a list $\mathcal{I} = (\iota_1,\dots,\iota_{3z})$ of $3z$ positive integers, such that $\sum_{i = 1}^{3z}\iota_i = zD$ and $D/4 < \iota_i < D/2$ for each $1\leq i \leq 3z$. The problem is to decide whether there exists a partition of the set $I = \{1,\dots,3z\}$ into sets $I_1, \dots I_z$, such that $\sum_{i \in I_j} \iota_i = D$ for each $1 \leq j \leq z$. We define $SIZE(\mathcal{I}) = \sum_{i = 1}^{3z} \log(\iota_i)$ as the input size of the problem.
3-Partition is strongly NP-complete \cite{GareyJ79}. Therefore, it can not be solved in pseudo-polynomial time, unless $P = NP$. 

\paragraph*{Construction}
First, we will describe how we generate an instance of $P4|size_j|C_{\max}$ from a given 3-Partition instance $\mathcal{I}$ in polynomial time. Let $\mathcal{I}  = (\iota_1,\dots,\iota_{3z})$ be a 3-Partition instance with $\sum_{i = 1}^{3z}\iota_i = zD$. If $D \leq \scalePart$, we scale each number with $\scalePart$ such that we get a new instance $\mathcal{I} ' := (\scalePart \cdot \iota_1,\dots,\scalePart \cdot \iota_{3z})$. For this instance, it holds that $D' = \scalePart D> \scalePart$ and $SIZE(\mathcal{I} ') \in \mathrm{poly}(SIZE(\mathcal{I} ))$. Furthermore, $\mathcal{I}$ is a Yes-instance if and only if $\mathcal{I}'$ is a Yes-instance. Therefore, we can w.l.o.g. assume that $D > \scalePart$.

\begin{figure}
\begin{align*}
p(i) = 
\begin{cases}
D^2 & i \in A\\
D^3 & i \in B\\
D^4 + D^6 +3zD^7 & i \in a,\\
D^5 + D^6 +3zD^7 & i \in b,\\
(z+j)D^7 + D^8 & i = c_j\in c, j\in \{0,\dots,z\}\\
D^3 + D^5 + 4zD^7 +D^8 &  i \in \alpha\\
D^2 + D^4 + (4z-1)D^7 +D^8 &  i \in \beta\\
D^5 + (3z -j)D^7 -D & i = \gamma_j \in \gamma, j\in \{1,\dots,z\}\\
D^4 + (3z -j)D^7 & i = \delta_j \in \delta, j\in \{1,\dots,z\}\\
D^3 + zD^7 + D^8 &  i = \lambda_1\\
D^2 + 2zD^7 + D^8 &  i = \lambda_2\\
\end{cases}
\end{align*}
\caption{Overview of the structure jobs}
\label{fig:overviewItems}
\end{figure}

In the following, we describe the jobs constructed for the reduction; see Figure \ref{fig:overviewItems} for an overview. We generate two sets $A$ and $B$ of $3$-processor jobs. $A$ contains $z+1$ jobs with processing time $p_A := D^2$ and $B$ contains $z+1$ jobs with processing time $p_B := D^3$. 
Furthermore, we generate three sets $a$, $b$ and $c$ of $2$-processor jobs, such that $a$ contains $z$ jobs with processing time $p_a := D^4 + D^6 +3zD^7$, $b$ contains $z$ jobs with processing time $p_b := D^5 + D^6 + 3zD^7$ while $c$ contains one job $c_j$ for each $0 \leq j \leq z$, having processing time $(z+j)D^7 + D^8$ resulting in $z+1$ jobs total in $c$. 
Last we define five sets $\alpha$, $\beta$, $\gamma$, $\delta$, and $\lambda$ of $1$-processor jobs, such that
$\alpha$ contains $z$ jobs with processing time $p_{\alpha} := D^3 + D^5 + 4zD^7 +D^8$, 
$\beta$ contains $z$ jobs with processing time $p_{\beta} := D^2 + D^4 + (4z-1)D^7 +D^8$, 
$\gamma$ contains for each $1 \leq j \leq z$ one job $\gamma_j$ with processing time $D^5 + (3z - j)D^7 -D$ resulting in $|\gamma| = z$, 
$\delta$ contains for each $1 \leq j \leq z$ one job $\delta_j$ with processing time $ D^4 + (3z -i)D^7$ resulting in $|\delta| = z$, 
and $\lambda$ contains two jobs $\lambda_1$ and $\lambda_2$ with processing times  $p(\lambda_1) := D^3 + zD^7 + D^8$ and $p(\lambda_2) := B+ c_0 = D^2 + 2zD^7 + D^8$.
We call these jobs \textit{structure jobs}.
Additionally, we generate for each $i \in \{1, \dots, 3z\}$ one 1-processor job, called \textit{partition job}, with processing time $\iota_i$. We name the set of partition jobs $P$.
Last, we define $W := (z+1)(D^2 + D^3 +D^8) + z(D^4 + D^5 +D^6) + z(7z + 1)D^7$. Note that the work of the generated jobs adds up to $4 W$.

If we add the processing times of all generated jobs, the largest coefficient before a $D^i$ is at most $\scalePart$. Since $\scalePart < D$, it can never happen that in the total processing time of a set of jobs the value $D^i$, together with its coefficient, influences the coefficient of $D^{i+1}$. Furthermore, if the processing times of a set of jobs add up to a value where one of the coefficients is larger than the coefficients in $W$, it is not possible that in a schedule with no idle time one of the machines contains this set.  

In the following sections, we will prove that there is a schedule with makespan $W$ if and only if the 3-Partition instance is a Yes-instance.

\paragraph*{Partition to Schedule} 
Let $\mathcal{I}$ be a Yes-instance with partition $I_1, \dots, I_z $.
One can easily verify that the \textit{structure jobs} can be scheduled as shown in Figure \ref{fig:packingStructure}. After each job $\gamma_j$, for each $1 \leq j \leq z$, we have a gap with processing time $D$. We schedule the \textit{partition jobs} with indices out of $I_j$ directly after $\gamma_j$. Their processing times add up to $D$, and therefore they fit into the gap. The resulting schedule has a makespan of $W$.

\paragraph*{Schedule to Partition}
In this section, we will show that if there is a schedule with makespan $W$, then $\mathcal{I}$ is a Yes-instance.
Let a schedule $S$ with makespan $W$ be given. We will now step by step describe why  $\mathcal{I}$ has to be a Yes-instance. In the first step, we will show that we can transform the schedule, such that certain machines contain certain jobs.

\begin{figure}
\centering
\begin{tikzpicture}
\pgfmathsetmacro{\h}{0.5}
\pgfmathsetmacro{\w}{0.5}

\draw (-3*\w,0*\h) rectangle node[midway]{$\lambda_1$}(0*\w,1*\h);

\foreach \x/\y in {0/1,1/2}{ 
\begin{scope}[xshift = 9*\h*\x cm]
\draw (-2*\w,3*\h) rectangle node[midway]{$\beta_{\y}$}(3*\w,4*\h);
\draw (0,0) rectangle node[midway]{$A_{\x}$}(1*\w,3*\h);
\draw (1*\w,0) rectangle node[midway]{$a_{\y}$}(4*\w,2*\h);
\draw (3*\w,2*\h) rectangle node[midway]{$b_{\y}$}(6*\w,4*\h);
\draw (-3*\w,1*\h) rectangle node[midway]{$B_{\x}$}(-2*\w,4*\h);
\draw (1*\w,2*\h) rectangle node[midway]{$\delta_{\y}$}(3*\w,3*\h);
\draw (4*\w,1*\h) rectangle node[midway]{$\gamma_{\y}$}(5.5*\w,2*\h);
\draw[pattern = north west lines] (5.5*\w,1*\h) rectangle (6*\w,2*\h);
\draw (4*\w,0*\h) rectangle node[midway]{$\alpha_{\y}$}(9*\w,1*\h);
\draw (-2*\w,1*\h) rectangle node[midway]{$c_{\x}$}(0*\w,3*\h);
\end{scope}
}

\node at (17*\w, 2.5*\h) {$\dots$};

\begin{scope}[xshift = 2*\w cm, yshift = -5*\h cm]
\foreach \z/\y/\i in {z-2/-2/z-1,z-1/-1/z} { 
\pgfmathsetmacro{\x}{\y +2}
\begin{scope}[xshift = 9*\w*\x cm]
\draw (-2*\w,3*\h) rectangle node[midway]{$\beta_{\i}$}(3*\w,4*\h);
\draw (0,0) rectangle node[midway, rotate=90]{$A_{\z}$}(1*\w,3*\h);
\draw (1*\w,0) rectangle node[midway]{$a_{\i}$}(4*\w,2*\h);
\draw (3*\w,2*\h) rectangle node[midway]{$b_{\i}$}(6*\w,4*\h);
\draw (-3*\w,1*\h) rectangle node[midway,rotate=90]{$B_{\z}$}(-2*\w,4*\h);
\draw[pattern = north west lines] (5.5*\w,1*\h) rectangle (6*\w,2*\h);
\draw (4*\w,0*\h) rectangle node[midway]{$\alpha_{\i}$}(9*\w,1*\h);
\draw (-2*\w,1*\h) rectangle node[midway]{$c_{\z}$}(0*\w,3*\h);
\end{scope}
}

\begin{scope}[xshift = 9*\w*2 cm]
\draw (-3*\w,1*\h) rectangle node[midway]{$B_z$}(-2*\w,4*\h);
\draw (-8*\w,2*\h) rectangle node[midway]{$\delta_{z}$}(-6*\w,3*\h);
\draw (-5*\w,1*\h) rectangle node[midway]{$\gamma_{z}$}(-3.5*\w,2*\h);
\draw (-2*\w,1*\h) rectangle node[midway]{$c_{z}$}(0*\w,3*\h);
\draw (0,0) rectangle node[midway]{$A_z$}(1*\w,3*\h);
\end{scope}

\draw (1*\w,2*\h) rectangle node[midway]{$\delta_{z-1}$}(3*\w,3*\h);
\draw (4*\w,1*\h) rectangle node[midway]{$\gamma_{z-1}$}(5.5*\w,2*\h);
\end{scope}

\draw (18*\w,-2*\h) rectangle node[midway]{$\lambda_2$}(21*\w,-1*\h);
\node at (-2*\w, - 2.5*\h) {$\dots$};

\node at (-4.5*\w,0.5 *\h) {$M_1$};
\node at (-4.5*\w,1.5 *\h) {$M_2$};
\node at (-4.5*\w,2.5 *\h) {$M_3$};
\node at (-4.5*\w,3.5 *\h) {$M_4$};
\end{tikzpicture}
\caption{An optimal schedule, for a Yes-instance for $a_i \in a$, $b_i \in b$, $A_j \in A$, $B_j \in B$, $\alpha_i \in \alpha$ and $\beta_i \in \beta$.}
\label{fig:packingStructure}
\end{figure}
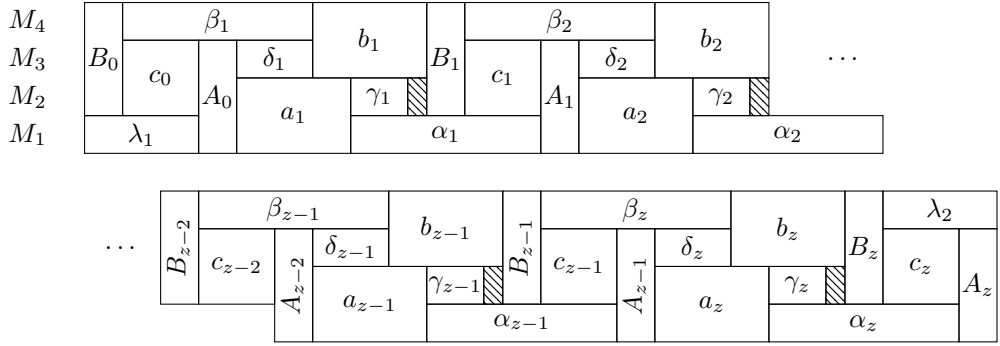

\begin{lemma}
\label{lma:JobsOnMachines}
We can transform the schedule $S$ into a schedule, where
$M_1$ contains the jobs $A \cup a \cup \alpha \cup \lambda_1$, $M_2$ contains the jobs $A \cup B \cup c \cup \check{a} \cup \check{b} \cup \check{\gamma} \cup \check{\delta}$, $M_3$ contains the jobs $A \cup B \cup c \cup \hat{a} \cup \hat{b} \cup \hat{\gamma} \cup \hat{\delta}$ and $M_4$ contains the jobs $B \cup b \cup \beta \cup \lambda_2$, with $\check{a} \subseteq a$, $\hat{a} = a \setminus \check{a}$, $\check{b} \subseteq b$, $\hat{b} = b \setminus \check{b}$, $\check{\gamma} \subseteq \gamma$, $\hat{\gamma} = \gamma \setminus \check{\gamma}$, and $\check{\delta} \subseteq \delta$, $\hat{\delta} = \delta \setminus \check{\delta}$. Furthermore, if the jobs are scheduled in this way, it holds that $|\check{a}| = |\check{\gamma}|$ and $|\check{b}| = |\check{\delta}|$.  
\end{lemma}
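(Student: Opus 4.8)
The plan is to analyse the schedule \emph{digit by digit in base $D$}. Since the makespan is $W$ while the total work of all jobs equals $4W$, no machine is ever idle, so the processing times of the jobs placed on any fixed machine $M$ sum to exactly $W$. By the observation preceding the lemma, the coefficients arising when summing processing times never reach $D$, so the base-$D$ expansion of the load of $M$ agrees with that of $W$ coefficient by coefficient (there is no carrying). Moreover $D^2\mid W$, and among all jobs only the $\gamma$-jobs (each contributing $-D$) and the partition jobs (each $\iota_i<D$) have a processing time that is not a multiple of $D^2$; since $\sum_i\iota_i=zD<D^2$ and the number of $\gamma$-jobs on $M$ is at most $z<D$, the part below $D^2$ must cancel exactly, giving $\sum_{i:\,\iota_i\text{ on }M}\iota_i = D\cdot n_M(\gamma)$, where $n_M(X)$ denotes the number of jobs of a set $X$ placed on $M$. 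Comparing the coefficients of $D^2,\dots,D^6$ and of $D^8$ in the load of $M$ with those of $W$ yields, writing $[\,\cdot\,]$ for the Iverson bracket,
\begin{align*}
n_M(A) + n_M(\beta) + [\lambda_2\in M] &= z+1,\\
n_M(B) + n_M(\alpha) + [\lambda_1\in M] &= z+1,\\
n_M(a) + n_M(\beta) + n_M(\delta) &= z,\\
n_M(b) + n_M(\alpha) + n_M(\gamma) &= z,\\
n_M(a) + n_M(b) &= z,\\
n_M(c) + n_M(\alpha) + n_M(\beta) + [\lambda_1\in M] + [\lambda_2\in M] &= z+1.
\end{align*}
Subtracting the fifth identity from the third and the fourth gives $n_M(\beta)+n_M(\delta)=n_M(b)$ and $n_M(\alpha)+n_M(\gamma)=n_M(a)$.

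Next I would exploit the $3$-processor jobs. The set $A\cup B$ consists of $2(z+1)$ jobs, each using three of the four machines; since two of them would need six machines, no two overlap in time, so they occur in a well-defined temporal order, each leaving exactly one machine free. Using $SWAP$ operations chained along this temporal order (each relabels two machines from the start point of a job using both of them onward), one normalises the schedule so that all jobs of $A$ leave one common machine free and all jobs of $B$ leave one common machine free. These two machines are distinct: if a machine $M$ were left free by every job of $A$ and of $B$, then $n_M(A)=n_M(B)=0$, the first two identities would force $n_M(\beta)=z$, $\lambda_2\in M$, $n_M(\alpha)=z$, $\lambda_1\in M$, and the sixth identity would read $n_M(c)+z+z+1+1=z+1$, which is impossible. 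So we may name $M_1$ the machine left free by all of $B$ and $M_4$ the one left free by all of $A$, and let $M_2,M_3$ be the remaining two; then $A\subseteq M_1\cap M_2\cap M_3$ and $B\subseteq M_2\cap M_3\cap M_4$ (we write $i\in M$ for ``$M$ is one of the machines of job $i$'').

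The placement of the remaining jobs is then forced by the identities. On $M_1$ one has $n_{M_1}(B)=0$, so $n_{M_1}(\alpha)+[\lambda_1\in M_1]=z+1$, and since $|\alpha|=z$ this gives $\alpha\subseteq M_1$ and $\lambda_1\in M_1$; symmetrically $\beta\subseteq M_4$ and $\lambda_2\in M_4$, hence $n_M(\alpha)=0$ for $M\ne M_1$ and $n_M(\beta)=0$ for $M\ne M_4$. Feeding $n_{M_1}(\alpha)=z$ into the fourth and then the fifth identity forces $n_{M_1}(b)=n_{M_1}(\gamma)=0$ and $n_{M_1}(a)=z$, i.e.\ $a\subseteq M_1$; the third identity then gives $n_{M_1}(\delta)=0$, the sixth gives $n_{M_1}(c)=0$, and the sub-$D^2$ identity gives that no partition job lies on $M_1$. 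Thus $M_1$ carries exactly $A\cup a\cup\alpha\cup\{\lambda_1\}$, and symmetrically $M_4$ carries exactly $B\cup b\cup\beta\cup\{\lambda_2\}$. Consequently each $c$-job, being a $2$-processor job avoiding $M_1$ and $M_4$, lies on $M_2\cap M_3$; each $a$- (resp.\ $b$-) job lies on $M_1$ (resp.\ $M_4$) together with one of $M_2,M_3$; and each $\gamma$-, $\delta$- and partition job lies on $M_2$ or $M_3$. Setting $\check a=a\cap M_2$, $\hat a=a\setminus\check a$, and analogously $\check b,\hat b,\check\gamma,\hat\gamma,\check\delta,\hat\delta$, this is exactly the asserted content of $M_2$ and $M_3$. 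Finally, on $M_2$ we have $n_{M_2}(\alpha)=n_{M_2}(\beta)=0$; subtracting the fifth identity from the fourth gives $n_{M_2}(\gamma)=n_{M_2}(a)$, i.e.\ $|\check\gamma|=|\check a|$, and subtracting it from the third gives $n_{M_2}(\delta)=n_{M_2}(b)$, i.e.\ $|\check\delta|=|\check b|$.

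The main obstacle is the normalisation step for the $3$-processor jobs: one must show that a sequence of $SWAP$ operations, chained along the temporal order of $A\cup B$, can align the free machines of all jobs of $A$ and of all jobs of $B$ simultaneously, and can be performed without destroying feasibility or the placements already fixed. Once this normalisation is available, everything else is bookkeeping with the six digit identities above.
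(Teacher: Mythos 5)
Your digit-by-digit bookkeeping is sound and, once the machines are normalised, it does yield the lemma (indeed a bit more slickly than the paper, since taking $n_{M_1}(B)=0$ as given lets you skip the $D^7$ coefficient entirely). But the step you yourself flag as ``the main obstacle'' is a genuine gap, and the specific method you propose cannot close it. A $SWAP$ performed at the start point of a $3$-processor job $x_i$ can only permute the three machines that $x_i$ occupies (no other multi-machine job can start at that moment, since $x_i$ blocks three of the four machines), so it can never move a later job onto the machine that $x_i$ leaves free. Hence, if two consecutive jobs of $A\cup B$ in the temporal order belong to the \emph{same} set and have \emph{different} free machines, the chained swaps along $A\cup B$ cannot align them: the needed relabelling would have to involve the free machine of the earlier job, which is unavailable at that start point. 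You cannot rule out such configurations at this stage of the argument --- the fact that jobs of $A$ and $B$ alternate in time is only proved later (Lemma \ref{lma:AeqB}), and its proof already uses the machine assignment of the present lemma, so invoking it here would be circular. Repairing the normalisation with swaps at start points of intervening $2$-processor jobs would require knowing which machine pairs those jobs occupy, i.e.\ essentially the structural information the lemma is supposed to establish.

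The paper resolves this by proving only a weaker normalisation with the chained swaps: every job of $A\cup B$ is made to run on both $M_2$ and $M_3$ (its free machine lies in $\{M_1,M_4\}$), which works because at the start of the $i$-th job of $A\cup B$ both machines involved in the swap, one from $\{M_1,M_4\}$ and one from $\{M_2,M_3\}$, are occupied by that job. The ``common missing machine'' property for $A$ and for $B$ is then \emph{derived}, not assumed: after the weak normalisation the load analysis shows $M_1,M_4$ each carry $z$ jobs of $a\cup b$ and $z+1$ jobs of $\alpha\cup\beta\cup\lambda$, and the $D^7$ coefficient (absent from your list of identities) is what forces all of $\alpha$ onto the machine carrying $\lambda_1$ and all of $\beta\cup\{\lambda_2\}$ onto the other; only then do the $D^2$, $D^3$, $D^4$ digits force $A\cup a$ onto $M_1$ and $B\cup b$ onto $M_4$, which in particular yields that all jobs of $A$ miss $M_4$ and all jobs of $B$ miss $M_1$. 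If you weaken your normalisation step accordingly and add the $D^7$ identity, your argument becomes correct --- but it then coincides with the paper's proof. As written, the proposal assumes at the outset a structural property that is in fact the substance of the lemma.
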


\begin{proof}
First, we will show that the content of the machines can be swapped without enlarging the makespan, such that $M_2$ and $M_3$ each contain all the jobs in $A \cup B$. 
Let $x \in A \cup B$ be the job with the smallest starting point in this set. 
We can swap the complete content of the machines such that $M_2$ and $M_3$ contain $x$. 
Let us suppose that, after some swapping operations, $M_2$ and $M_3$ contain the first $i$ jobs in $A \cup B$. 
Let $\tilde{M} \in \{M_1,M_4\}$ be the third machine containing the $i$-th job $x_i \in A \cup B$. 
Let $\tilde{M}'$ be the machine not containing the $(i+1)$-th job. If $\tilde{M}' \in \{M_2,M_3\}$, we transform the schedule such that $M_2$ and $M_3$ contain it, by performing one more swapping operation $SWAP(\sched(x_i),\tilde{M}, \tilde{M}')$. Therefore, we can transform the given schedule without increasing its makespan such that $M_2$ and $M_3$ each contain all the jobs in $A \cup B$. 

In the next step, we will determine the set of jobs contained by the machines $M_1$ and $M_4$. 
The machines $M_2$ and $M_3$ contain besides the jobs in $A \cup B$ jobs with total processing time of $zD^4 + zD^5 +zD^6 + z(7z+1)D^7 + (z+1)D^8$. 
Hence, $M_2$ or $M_3$ can not contain jobs in $\alpha \cup \beta \cup \lambda$, since
their processing times contain $D^2$ or $D^3$. 
Therefore, each job in $A \cup B \cup \alpha \cup \beta \cup \lambda$ is either processed on $M_1$ or on $M_4$. 
In addition to these jobs, $M_1$ and $M_4$ contain further jobs with a total processing time of $zD^4 + zD^5 + 2zD^6 + 6z^2D^7$ total.
The only jobs with a processing time containing $D^6$ are the jobs in the set $a \cup b$. 
Therefore, each machine processes $z$ jobs from the set $a \cup b$.
Hence, a total processing time of $3z^2D^7$ is used by jobs in the set $a \cup b$ on each machine. 
This leaves a processing time of $(4z^2 +z)D^7$ for the jobs in $\alpha \cup \beta \cup \lambda$ on $M_1$ and $M_4$ corresponding to $D^7$. 
All the $2(z+1)$ jobs in $\alpha \cup \beta \cup \lambda$ contain $D^8$ in their processing time. 
Therefore, each machine $M_1$ and $M_4$ processes $z+1$ of them. 
We will swap the content of $M_1$ and $M_4$ such that $\lambda_1$ is scheduled on $M_1$. 
As a consequence, $M_1$ processes $z$ jobs from the set $\alpha \cup \beta \cup \{\lambda_2\}$, with processing times, which sum up to $4z^2 D^7$ in the $D^7$ component.
The jobs in $\alpha$ have with $4zD^7$ the largest amount of $D^7$ in their processing time. 
Therefore, $M_1$ processes all of them since $z \cdot 4zD^7 = 4z^2D^7$, while $M_4$ contains the jobs in $\beta \cup \{\lambda_2\}$. 
Since we have $p(\alpha \cup \{\lambda_1\}) = (z+1)D^3 + zD^5 +z(4z+1)D^7(z+1)D^8$, jobs from the set $A \cup B \cup a \cup b$ with total processing time of $(z+1)D^2 + zD^4 + zD^6 + 3z^2D^7$ have to be scheduled on $M_1$. In this set, the jobs in $A$ are the only jobs with processing times containing $D^2$, while the jobs in $a$ are the only jobs with a processing time containing $D^4$.  
As a consequence, $M_1$ processes the jobs $A \cup a \cup \alpha \cup \{\lambda_1\}$. Analogously we can deduce that $M_4$ processes the jobs $B \cup b \cup \beta \cup \{\lambda_2\}$. 

In the last step, we will determine which jobs are scheduled on $M_2$ and $M_3$. 
As shown before, each of them contains the jobs $A \cup B$. 
Furthermore, since no job in $c$ is scheduled on $M_1$ or $M_4$, and they require two machines to be processed, machines $M_2$ and $M_3$ both contain the set $c$. 
Additionally, each job in $\gamma \cup \delta$ has to be scheduled on $M_2$ or $M_3$ since they are not scheduled on $M_1$ or $M_4$. Each job in $a \cup b$ occupies one of the machines $M_1$ and $M_4$. The second machine they occupy is either $M_2$ or $M_3$. Let $\check{a} \subseteq a$ be the set of jobs, which is scheduled on $M_2$ and $\hat{a} \subseteq a$ be the set which is scheduled on $M_3$. Clearly $\check{a} = a \setminus \hat{a}$. We define the sets $\hat{b}, \check{b}, \hat{\delta},\check{\delta}, \hat{\gamma}$, and $\check{\gamma}$ analogously. By this definition, $M_2$ contains the jobs $A \cup B \cup \check{a} \cup \check{b} \cup \check{\delta} \cup \check{\gamma} \cup c$ and $M_3$ contains the jobs $A \cup B \cup \hat{a} \cup \hat{b} \cup \hat{\delta} \cup \hat{\gamma} \cup c$. 

We still have to show that $|\check{a}| = |\check{\gamma}|$ and $|\check{b}| = |\check{\delta}|$. 
First, we notice that $|\check{a}| + |\check{b}| = z$ since these jobs are the only jobs with a processing time containing $D^6$. 
So besides the jobs in $A \cup B\cup c \cup \check{a} \cup \check{b}$, $M_2$ contains jobs with total processing time of $(z-|\check{a}|)D^4 + (z-|\check{b}|)D^5 + \sum_{i = 1}^z(3z-i)D^7 =|\check{b}| D^4 + |\check{a}|D^5 + \sum_{i = 1}^z(3z-i)D^7$. Since the jobs in $\delta$ are the only jobs in $\delta \cup \gamma$ having a processing time containing $D^4$, we have $|\check{\delta}|= |\check{b}|$ and analogously $|\check{\gamma}|= |\check{a}|$.
\end{proof}

In the next steps, we will prove that it is possible to transform the order in which the jobs appear on the machines to the order in Figure \ref{fig:packingStructure}. 
Notice that, since there is no idle time in the schedule, each start point of a job $i$ is given by the sum of processing times of the jobs on the same machine scheduled before $i$. So the start position $\sched(i)$ of a job $i$ has the form 
\[\sched(i) = x_0 + x_2D^2 + x_3D^3 +x_4D^4 + x_5D^5 + x_6D^6 + x_7D^7 + x_8D^9\]
for $-zD \leq x_0 \leq zD$ and $0 \leq x_j \leq 4z(7z + 1) \leq D$ for each $2 \leq j \leq 8$. This allows us to make implications about the correlation between the number of jobs scheduled on different machines when a job from the set $A \cup B\cup a \cup b \cup c$ starts. 
For example, let us look at the coefficient $x_2$. This value is just influenced by jobs with processing times containing $D^2$. 
The only jobs with these processing times are the jobs in the set $A \cup \beta \cup \{\lambda_2\}$. The jobs in $\beta \cup \{\lambda_2\}$ are just processed on $M_4$, while the jobs in $A$ each are processed on the three machines $M_1$, $M_2$, and $M_3$. 
Therefore, we know that at the starting point $\sched(i)$ of a job $i$ scheduled on machines $M_1$, $M_2$ or $M_3$ we have that $x_2 = \#_iA$. 
Furthermore, if $i$ is scheduled on $M_4$ we know that $x_2 = \#_i\beta + \#_i\{\lambda_2\}$.
In Table \ref{table:coefficients} we present which sets influences which coefficients in which way when job $i$ is started on the corresponding machine.

\begin{table}
\center
\begin{tabular}{r|cccc}
&$M_1$&$M_2$&$M_3$ & $M_4$\\
\hline
$x_2$&$\#_iA$&$\#_iA$&$\#A$&$\#_i\beta + \#_i\{\lambda_2\}$\\
$x_3$&$\#_i\alpha + \#_i\{\lambda_1\}$&$\#_iB$&$\#_iB$ & $\#_iB$ \\
$x_4$&$\#_i a$&$\#_i \check{a} + \#_i\check{\delta}$&$\#_i \hat{a} + \#_i\hat{\delta}$& $\#_i\beta $\\
$x_5$&$\#_i \alpha$&$\#_i \check{b} + \#_i\check{\gamma}$&$\#_i \hat{b} + \#_i\hat{\gamma}$& $\#_i b$\\
$x_6$&$\#_i a$&$\#_i \check{a} + \#_i\check{b}$&$\#_i \hat{a} + \#_i\hat{b}$& $\#_i b$\\
$x_8$&$\#_i\alpha + \#_i\{\lambda_1\}$&$\#_ic$&$\#_ic$& $\#_i\beta + \#_i\{\lambda_2\}$\\
\end{tabular}
\caption{Overview of the values of the coefficients at the start point of a job $i$, if $i$ is scheduled on machine $M_j$.}
\label{table:coefficients}
\end{table}
Let us consider the start point $\sched(i)$ of a job $i$, which uses more than one machine.
We know that $\sched(i)$ is the same on all the used machines and therefore the coefficients are the same as well.  
In the following, we will study for each of the sets $A$, $B$, $a$, $b$, $c$ what we can conclude for the starting times of these jobs. For each of the sets, we will present an equation, which holds at the start of each item in this set. These equations give us a strong set of tools for our further arguing.

First, we will consider the start points of the jobs in $A$. 
Each job $A' \in A$ is scheduled on machines $M_1$, $M_2$ and $M_3$. 
Therefore, we know that at $s(A')$ we have $\#_{A'}B =_{x_3} \#_{A'}\alpha + \#_{A'}\{\lambda_1\} =_{x_8} \#_{A'} c$. 
Furthermore, we know that $\#_{A'} a =_{x_6} \#_{A'} \check{a} + \#_{A'} \check{b} = \#_{A'} \hat{a} + \#_{A'} \hat{b}$. 
Since $\#_{A'} a = \#_{A'} \check{a} + \#_{A'} \hat{a}$ and $\#_{A'} b = \#_{A'} \check{b} + \#_{A'} \hat{b}$, we can deduce that $\#_{A'} a = \#_{A'} b$. Additionally, we know that $\#_{A'} \alpha  =_{x_5} \# \check{b} + \#\check{\gamma} =_{x_5} \# \hat{b} + \#\hat{\gamma}$. 
Thanks to this equality, we can show that $\#_{A'} \alpha = \#_{A'} b$. 
First, we show $\#_{A'} \alpha \geq \#_{A'} b$.
Let $b' \in b$ be the last job in $b$ scheduled before $A'$ if there is any. 
Let us w.l.o.g assume that $b \in \hat{b}$. 
It holds that $\#_{A'} b = \#_{b'} b +1 =_{x_5} \#_{b'} \hat{b} + \#_{b'} \hat{\gamma} + 1 \leq \#_{A'} \hat{b} + \#_{A'} \hat{\gamma} =_{x_5} \#_{A'} \alpha$. 
If there is no such $b'$ we have $\#_{A'} b = 0 \leq \#_{A'} \alpha$.
Next, we show $\#_{A'} \alpha \leq \#_{A'} b$.
Let $b'' \in A$ be the first job in $b$ scheduled after $A$ if there is any. 
Let us w.l.o.g assume that $b \in \check{b}$. 
It holds that $\#_{A'} b = \#_{b''} b =_{x_5} \#_{b''} \check{b} + \#_{b''} \check{\gamma}\geq \#_{A'} \check{b} + \#_{A'} \check{\gamma} =_{x_5} \#_{A'} \alpha$. 
If there is no such $b''$, we have $\#_{A'} b = z \geq \#_{A'} \alpha$. 
As a consequence we have $\#_{A'} \alpha = \#_{A'} b$. 
In summary, we can deduce that 
\begin{equation}
\#_{A'} c - \#_{A'}\{\lambda_1\} = \#_{A'}B - \#_{A'}\{\lambda_1\} =\#_{A'} \alpha = \#_{A'} b = \#_{A'} a.
\label{eq:A1}
\end{equation}
Analogously, we can deduce that at the start of each $B' \in B$ we have that
\begin{equation}
\#_{B'} c - \#_{B'}\{\lambda_2\} = \#_{B'}A - \#_{B'}\{\lambda_2\} = \#_{B'} \beta = \#_{B'} a = \#_{B'} b.
\label{eq:B1}
\end{equation}
Each item $a' \in a$ is scheduled on machine $M_1$ and on one of the machines $M_2$ or $M_3$. For each possibility, we can deduce the equation
\begin{equation}
\label{eq:a1}
\#_{a'}B =_{x_3} \#_{a'}\alpha + \#_{a'}\{\lambda_1\} =_{x_8} \#_{a'} c.
\end{equation}
Analogously, we deduce for each $b' \in b$ that
\begin{equation}
\label{eq:b1}
\#_{b'}A =_{x_2} \#_{b'}\beta + \#_{b'}\{\lambda_2\} =_{x_8} \#_{b'} c.
\end{equation} 
Last, each item $c' \in c$ is scheduled on $M_2$ and $M_3$.
Let $a' \in a$ be the job with the smallest $\sched(a') \geq \sched(c')$. Let us w.l.o.g assume that $a' \in \hat{a}$.
It holds that 
$\#_{c'} \check{a} + \#_{c'}\check{b} =_{x_6} 
\#_{c'} \hat{a} + \#_{c'}\hat{b} \leq 
\#_{a'} \hat{a} + \#_{a'}\hat{b} = _{x_6}
\#_{a'} a = \#_{a'} \hat{a} + \#_{a'}\check{a} = \#_{c'} \hat{a} + \#_{c'}\check{a}$. As a consequence, we have $\#_{c'}\check{b} \leq \#_{c'} \hat{a}$ and  $\#_{c'}\hat{b} \leq \#_{c'} \check{a}$. Analogously, let $b ' \in b$ be the job with the smallest $\sched(b') \geq \sched(c')$. Let us w.l.o.g assume that $b' \in \check{b}$.
It holds that 
$\#_{c'} \hat{a} + \#_{c'}\hat{b} =_{x_6} 
\#_{c'} \check{a} + \#_{c'}\check{b} \leq 
\#_{b'} \check{a} + \#_{b'}\check{b} = _{x_6}
\#_{b'} b = \#_{b'} \hat{b} + \#_{b'}\check{a} = \#_{c'} \hat{b} + \#_{c'}\check{b}$. 
Therefore, $\#_{c'}\check{a} \leq \#_{c'} \hat{b}$ and  $\#_{c'}\hat{a} \leq \#_{c'} \check{b}$.
As a consequence, we can deduce that
\begin{equation}
\label{eq:c1}
\#_{c'} b = \#_{c'} a
\end{equation}

These equations give us the tools to analyze the given schedule with makespan $W$. First, we will show that in this schedule the first and last jobs have to be elements from the set $A \cup B$, (see Lemma \ref{lma:firstJob}). After that, we will prove that the jobs in $A$ and jobs in $B$ have to be scheduled alternating, (see Lemma \ref{lma:AeqB}). 
With the knowledge gathered in the proofs of Lemma \ref{lma:firstJob} and Lemma \ref{lma:AeqB}, we can prove that the given schedule can be transformed such that all jobs are scheduled continuously, and that $\mathcal{I}$ has to be a Yes-instance (see Lemma \ref{lma:AeqB}).

\begin{lemma}
\label{lma:firstJob}
The first and the last job on $M_2$ and $M_3$ are elements of $A \cup B$.
\end{lemma}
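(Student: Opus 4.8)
The plan is to argue by contradiction, using Lemma~\ref{lma:JobsOnMachines} to restrict which jobs can occur on $M_2$ and $M_3$, and the relations \eqref{eq:A1}--\eqref{eq:c1} to propagate the constraint that something precedes the first $A\cup B$-job. First I would reduce the statement: since the makespan equals $W$ and the total work of all jobs is $4W$, the schedule has no idle time, so on every machine the first job starts at time $0$ and the last job ends at time $W$. The claim for $M_3$ is obtained from the one for $M_2$ by exchanging the checked and hatted sets, and the claim for the \emph{last} jobs follows by applying the first-job claim to the time-reversed schedule $i\mapsto W-\sched(i)-p(i)$, which is again feasible, has makespan $W$, no idle time, and the very same machine assignments, and hence (Table~\ref{table:coefficients} and the relations \eqref{eq:A1}--\eqref{eq:c1} use only the machine assignments and the absence of idle time) still satisfies \eqref{eq:A1}--\eqref{eq:c1}. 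So it suffices to prove that the first job $v$ on $M_2$ lies in $A\cup B$.

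So suppose, for contradiction, that $v\notin A\cup B$. I would let $x$ be a job of $A\cup B$ with smallest start point. Every job of $A\cup B$ runs on $M_2$ and $x\neq v$, so $v$ precedes $x$ on $M_2$; hence $\sched(x)\ge \sched(v)+p(v)=p(v)>0$ and $\#_x A=\#_x B=0$. Feeding $\#_x A=\#_x B=0$ into \eqref{eq:A1} if $x\in A$, or into \eqref{eq:B1} if $x\in B$, then yields $\#_x a=\#_x b=\#_x c=0$. In particular no job of $c\cup\check a\cup\check b$ finishes at or before $\sched(x)$, and no job of $A\cup B$ starts before $x$; since by Lemma~\ref{lma:JobsOnMachines} the machine $M_2$ carries, apart from $A\cup B\cup c\cup\check a\cup\check b$, only jobs of $\check\gamma\cup\check\delta$ and partition jobs (the latter must lie on $M_2$ or $M_3$, as $M_1$ and $M_4$ are entirely filled by structure jobs), every job of $M_2$ scheduled before $x$, and in particular $v$, belongs to $\check\gamma\cup\check\delta\cup P$.

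Next I would estimate sizes: the total processing time of all jobs in $\check\gamma\cup\check\delta\cup P$ is at most $p(\gamma)+p(\delta)+\sum_i\iota_i< (zD^5+3z^2D^7)+(zD^4+3z^2D^7)+zD$, which is smaller than $D^8$ because $D>\scalePart$; hence $0<\sched(x)<D^8$. Now I look at $M_1$ if $x\in A$, or at $M_4$ if $x\in B$: since $\sched(x)>0$, some job $u$ precedes $x$ on that machine, and $u$ is not a job of $A$ (resp.\ $B$) by the choice of $x$, so by Lemma~\ref{lma:JobsOnMachines} $u\in a\cup\alpha\cup\{\lambda_1\}$ (resp.\ $u\in b\cup\beta\cup\{\lambda_2\}$). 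If $u\in a$ (resp.\ $u\in b$) then $u$ finishes at or before $\sched(x)$, so $\#_x a\ge1$ (resp.\ $\#_x b\ge1$), contradicting the previous paragraph. Otherwise $u$ has processing time greater than $D^8$ (each job of $\alpha\cup\beta\cup\lambda$ contains $D^8$ in its processing time), so $\sched(u)+p(u)>D^8>\sched(x)$, contradicting $\sched(u)+p(u)\le\sched(x)$. In all cases we reach a contradiction, so $v\in A\cup B$.

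I expect the main obstacle to be the last step. The cases in which $v$ is one of the multi-processor jobs $c\cup\check a\cup\check b$ — and likewise the long jobs of $M_1$ and $M_4$ — are ruled out cheaply through the $\#$-equations. The genuinely delicate part is excluding a \emph{short} $1$-processor job ($v$ a $\gamma$-, $\delta$- or partition job) from being first on $M_2$: this is what forces the quantitative estimate $\sched(x)<D^8$, after which the contradiction comes from the observation that then no long job of $M_1$ (resp.\ $M_4$) has room to sit before $x$, even though $\sched(x)>0$ demands one.
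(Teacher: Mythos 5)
Your argument is correct and follows essentially the same route as the paper: take the earliest job $x\in A\cup B$, feed $\#_xA=\#_xB=0$ into \eqref{eq:A1} resp.\ \eqref{eq:B1}, conclude via the contents of $M_1$ resp.\ $M_4$ (which carry no partition jobs) that $\sched(x)=0$, and handle the last jobs by mirroring the schedule. The only deviation is your quantitative detour $\sched(x)<D^8$ used to exclude a predecessor from $\alpha\cup\{\lambda_1\}$ (resp.\ $\beta\cup\{\lambda_2\}$); the paper does not need this estimate, since the same equation \eqref{eq:A1} (resp.\ \eqref{eq:B1}) already forces $\#_x\alpha=\#_x\{\lambda_1\}=0$ (resp.\ $\#_x\beta=\#_x\{\lambda_2\}=0$), so no job on $M_1$ (resp.\ $M_4$) whatsoever can finish before $\sched(x)$.
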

\begin{proof}
Let $i := \arg\min_{i \in A \cup B} s_i$ be the job with the smallest start point in $A \cup B$, (i.e. $\#_{i} A = 0 = \#_{i} B$). We have to consider each case $i \in A$ and $i \in B$ and to show that its starting time has the value $s_i = 0$. 

If $i \in A$ it holds that $0= \#_{i}B =_{(\ref{eq:A1})} \#_{i}\alpha + \#_{i}\{\lambda_1\} =_{(\ref{eq:A1})} \#_{i} a + \#_{i}\{\lambda_1\}$ and therefore $\#_{i} a =\#_{i}\alpha = 0 = \#_{i}\{\lambda_1\}$. 
The jobs $a \cup \alpha \cup \{\lambda_1\} \cup A$ are the only jobs, which are contained on machine $M_1$. 
Since $\#_{i} A =0$ as well, it has to be that $s_i = 0$ and therefore $i$ is the first job on $M_2$ and $M_3$.

If $i \in B$ it holds that $ 0= \#_{i}A =_{(\ref{eq:B1})}\#_{i}\beta + \#_{i}\{\lambda_2\} =_{(\ref{eq:B1})} \#_{i} b+ \#_{i}\{\lambda_2\}$ and therefore $\#_{i} b = \#_{i}\beta = 0 =\#_{i}\{\lambda_2\}$. The jobs $b \cup \beta \cup \{\lambda_2\} \cup B$ are the only jobs, which are contained on machine $M_4$. Since $\#_{i} B = 0$ as well, it has to be that $s_i = 0$ and therefore $i$ is the first job on $M_2$ and $M_3$.

We have shown that the first job on $M_2$ and $M_3$ hast to be a job from the set $A \cup B$. Since the schedule stays valid, if we mirror the schedule such that the new start points are $s'(i) = W-s(i)-p(i)$ for each job $i$, the last job has to be in the set $A \cup B$ as well.
\end{proof}

Next, we will show that the items in the sets $A$ and $B$ have to be scheduled alternating. 
Let $(A_0, \dots, A_z)$ be the set $A$ and $(B_0, \dots, B_z)$ be the set $B$ each ordered by increasing size of the starting points.

\begin{lemma}
\label{lma:AeqB}
If the first item on $M_2$ is the job $B_0 \in B$ it holds for each item $i \in \{0, \dots, z\}$ that
\begin{equation}
\label{eq:AandB}
\#_{A_i} B - \#_{A_i} \{\lambda_1\} = \#_{A_i} A
\end{equation}
with $\#_{A_i} \{\lambda_1\} =1$. 
\end{lemma}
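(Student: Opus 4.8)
The plan is to prove \eqref{eq:AandB} in three stages: first settle the case $i=0$ directly from the hypothesis; then observe that $\#_{A_i}\{\lambda_1\}=1$ for all $i$; and finally obtain $\#_{A_i}B=i+1$ for all $i$ by a counting argument. Throughout I rely on the equations \eqref{eq:A1}--\eqref{eq:c1}, on Lemma~\ref{lma:firstJob} and Lemma~\ref{lma:JobsOnMachines}, and on the fact that a makespan-$W$ schedule has no idle time, so that $A\cup a\cup\alpha\cup\{\lambda_1\}$ tiles $M_1$ and $B\cup b\cup\beta\cup\{\lambda_2\}$ tiles $M_4$ exactly. Two simplifications are used repeatedly. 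Since any two jobs of $A$ are processed on $M_1$, $M_2$ and $M_3$, they are totally ordered in time, so $\#_{A_i}A=i$; hence \eqref{eq:AandB} is equivalent to ``$\#_{A_i}B=i+1$ and $\#_{A_i}\{\lambda_1\}=1$''. And by \eqref{eq:A1} we always have $\#_{A_i}B=\#_{A_i}a+\#_{A_i}\{\lambda_1\}=\#_{A_i}\alpha+\#_{A_i}\{\lambda_1\}$, which lets us translate statements about $B$ into statements about the order of $a$- and $\alpha$-jobs on $M_1$.

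For $i=0$: by hypothesis $B_0$ is the first job on $M_2$, so $\sched(B_0)=0$, and since $B_0$ is processed on $M_2,M_3,M_4$ it is first on all three and occupies $[0,p_B]$ there. Consequently $A_0$ is not first on $M_2$, whence $\#_{A_0}B\ge 1$; and the first job of $M_1$ is neither $A_0$ nor any job of $a$, because each of these is processed on $M_2$ or on $M_3$ and, started at time $0$, would overlap $B_0$. Thus the first job of $M_1$ belongs to $\alpha\cup\{\lambda_1\}$, i.e.\ its processing time contains $D^8$. From here I would read the start position $\sched(A_0)$ digit by digit in powers of $D$: on $M_1$ only jobs of $a\cup\alpha\cup\{\lambda_1\}$ precede $A_0$, while on $M_2$ and $M_3$ the coefficients are governed by Table~\ref{table:coefficients} and the interval $[0,p_B]$ is already used by $B_0$; comparing the two sides forces $\#_{A_0}a=\#_{A_0}\alpha=0$ and $\#_{A_0}\{\lambda_1\}=1$. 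Then \eqref{eq:A1} gives $\#_{A_0}B=\#_{A_0}a+\#_{A_0}\{\lambda_1\}=1$, and together with $\#_{A_0}A=0$ this is \eqref{eq:AandB} for $i=0$.

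Since $\lambda_1$ finishes before $A_0$ and the jobs of $A$ are time-ordered, $\lambda_1$ finishes before every $A_i$, so $\#_{A_i}\{\lambda_1\}=1$ for all $i$; this already gives one half of \eqref{eq:AandB}. For the other half, set $m_i:=\#_{A_i}a-\#_{A_{i-1}}a$, the number of jobs of $a$ lying between $A_{i-1}$ and $A_i$ on $M_1$. Using \eqref{eq:A1} and $\#_{A_j}\{\lambda_1\}=1$ we get $m_i=\#_{A_i}B-\#_{A_{i-1}}B\ge 0$, hence $\sum_{i=1}^{z}m_i=\#_{A_z}B-\#_{A_0}B=\#_{A_z}B-1\le|B|-1=z$. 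If I can show that $m_i\ge 1$ for every $i$, then $\sum m_i=z$ forces $m_i=1$ for all $i$, and therefore $\#_{A_i}B=\#_{A_0}B+\sum_{j\le i}m_j=1+i$, completing the lemma. Proving $m_i\ge 1$ means ruling out that $A_{i-1}$ and $A_i$ are adjacent on $M_1$ (equivalently on $M_2$); for this I would again compare $\sched(A_i)$ digit by digit across the four machines, now also using the mirror of the argument in the case $i=0$ together with Lemma~\ref{lma:firstJob} — the last job of $M_2$, hence of $M_3$, lies in $A\cup B$, so an $a$-job finishing after $A_z$ would overrun the end $W$ of $M_2$ or $M_3$ — and the $B$-side equations \eqref{eq:B1}, \eqref{eq:b1}, which force the jobs of $B$ apart on $M_4$ in the same way.

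The main obstacle is precisely this pinning of $\#_{A_i}B$ to $i+1$: equations \eqref{eq:A1}--\eqref{eq:c1} are on their own consistent with a range of values of $\#_{A_i}B$, and what excludes the rest is the exact tiling of $M_1$ and $M_4$ at height $W$ combined with the boundary information (the hypothesis on the first job of $M_2$ and Lemma~\ref{lma:firstJob} on the last). Carrying out the digit-by-digit comparison of the start positions on all four machines, and correctly handling the coupling between the $A$-side and the $B$-side and the placement of the partition jobs — whose total length $zD$ is exactly what the $z$ slightly-short jobs of $\gamma$ leave room for — is where essentially all the technical effort goes.
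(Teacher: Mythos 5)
Your high-level skeleton is sound and in one respect genuinely leaner than the paper's: after the base case $i=0$ and the observation $\#_{A_i}\{\lambda_1\}=1$, your pigeonhole argument ($m_i\ge 0$, $\sum_{i=1}^z m_i=\#_{A_z}B-1\le z$, hence $m_i=1$ for all $i$ once $m_i\ge 1$ is known) would replace the paper's upper-bound claim $\#_{A_{i+1}}B-\#_{A_{i+1}}\{\lambda_1\}\le\#_{A_{i+1}}A$ by a global counting step, so that only the lower-bound direction needs a local argument. That is a real simplification of the induction's bookkeeping.

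However, the two facts you defer — that $\#_{A_0}B=1=\#_{A_0}\{\lambda_1\}$, and that $m_i\ge 1$ for every $i$ — are exactly where the lemma's content lies, and the method you propose for them does not suffice. A digit-by-digit comparison of $\sched(A_0)$ across the machines is a purely local constraint and is satisfiable by bad configurations: for instance, $M_1$ beginning with one $\alpha$-job, then one $a$-job, then $A_0$ (so $\#_{A_0}\{\lambda_1\}=0$, $\#_{A_0}B=\#_{A_0}a=\#_{A_0}b=\#_{A_0}c=1$) matches every coefficient of $\sched(A_0)=D^3+D^4+D^5+D^6+7zD^7+D^8$ on $M_2$ and $M_3$ (one $B$-job, one $c_j$, one job of $\check a$ resp. $\hat b$, one $\check\gamma_{j}$ resp. $\hat\delta_{j}$, plus partition jobs of total length $D$). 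Excluding this requires the global ordering argument the paper gives: taking the first-started jobs $a',b',c'$ and playing \eqref{eq:a1}, \eqref{eq:b1}, \eqref{eq:c1} against each other to force $c'$ both after a finished $b$-job and before any finished $a$-job, contradicting \eqref{eq:c1}. Likewise, $m_i\ge 1$ (no two consecutive $A$-jobs adjacent in time) is not a ``mirror'' of the $i=0$ case: the base case is anchored by the hypothesis that $B_0$ starts at time $0$, whereas an interior gap has no such anchor, and the paper's corresponding step needs the inductive hypothesis \eqref{eq:AB} plus an ordering analysis of auxiliary jobs ($B_i,c',A_i,b',A_{i+1},c'',B_{i+1}$) to reach a contradiction via \eqref{eq:A1}. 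Since you explicitly leave both of these arguments unexecuted (and acknowledge that ``essentially all the technical effort'' sits there), the proposal as it stands has a genuine gap: the counting framework is conditional on local impossibility statements that still need proofs of the same kind and difficulty as those in the paper.
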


\begin{proof}
We will prove this claim inductively and per contradiction. 

Assume $\#_{A_0} B - \#_{A_0} \{\lambda_1\} > \#_{A_0} A = 0$. 
Therefore, we have 
$1 \leq \#_{A_0} B - \#_{A_0} \{\lambda_1\}.$ 
Let $a' \in a$, $b' \in b$ and $c' \in c$ be the first started jobs in their sets.
Since $\#_{A_0} b =_{(\ref{eq:A1})} \#_{A_0} a =_{(\ref{eq:A1})} \#_{A_0} c - \#_{A_0} \{\lambda_1\}=_{(\ref{eq:A1})} \#_{A_0} B - \#_{A_0} \{\lambda_1\} \geq 1$, the jobs $a',b'$ and $c'$ start before $A_0$.
It holds that
$\#_{b'} c =_{(\ref{eq:b1})} \#_{b'} A = 0.$
Therefore, $c'$ has to start after $b'$ resulting in $\#_{c'}b  \geq 1$. 
Furthermore, we have 
$\#_{a'} c =_{(\ref{eq:a1})} \#_{a'} B \geq 1.$
Hence, $c'$ has to start before $a'$ resulting in $\#_{c'}a = 0$. 
In total we have
$1 \leq \#_{c'}b =_{(\ref{eq:c1})} \#_{c'}a = 0$
contradicting the assumption $\#_{A_0} B - \#_{A_0} \{\lambda_1\} > \#_{A_0} A = 0$.
Therefore, we have $\#_{A_0} B - \#_{A_0} \{\lambda_1\} \leq \#_{A_0} A = 0$. As a consequence, it holds that $1 \leq \#_{A_0} B \leq \#_{A_0} \{\lambda_1\} \leq 1$ and we can conclude $\#_{A_0} B = 1 = \#_{A_0} \{\lambda_1\}$ as well as $\#_{A_0} B - \#_{A_0} \{\lambda_1\} = \#_{A_0} A$.

Choose $i \in \{0,\dots, z\}$ such that $\#_{A_{i'}} B - \#_{A_{i'}} \{\lambda_1\} = \#_{A_{i'}} A$ for all $i' \in \{0, \dots i\}$.
As a consequence, we have $\#_{B_{i}} B = i = \#_{A_{i}} A = \#_{A_{i}} B -1$. Therefore $B_i$ has to be scheduled before $A_i$. Additionally, we have $\#_{B_{i}} B-1=\#_{B_{i-1}} B= i-1=\#_{A_{i-1}} A = \#_{A_{i-1}} B -1$, so $B_i$ has to be scheduled after $A_{i-1}$. Therefore, we have $\#_{B_{i}} B = \#_{B_{i}} A$ and as a consequence 
\begin{equation}
i = \#_{B_{i}} B = \#_{B_{i}} A = \#_{B'} c = \#_{B'} \beta + \#_{B'}\{\lambda_2\}= \#_{B'} a + \#_{B'}\{\lambda_2\} = \#_{B'} b + \#_{B'}\{\lambda_2\}.
\label{eq:AB}
\end{equation}
We will now prove our claim for $A_{i+1}$.

\begin{claim}
$\#_{A_{i+1}} B - \#_{A_{i+1}} \{\lambda_1\} \leq \#_{A_{i+1}} A$
\end{claim}

Assume for contradiction that $\#_{A_{i+1}} B - \#_{A_{i+1}} \{\lambda_1\} > \#_{A_{i+1}} A$. 
As a consequence, we have 
$\#_{A_{i+1}} B - \#_{A_{i+1}} \{\lambda_1\} - \#_{A_{i}} B + \#_{A_{i}} \{\lambda_1\} \geq 2$. Therefore, there are jobs $B_{i+1}, B_{i+2} \in B$,  $a', a'' \in a$,  $b', b'' \in b$ and $c', c'' \in c$, that are scheduled between $A_i$ and $A_{i+1}$ since equality (\ref{eq:A1}) holds. Let us suppose that $\sched(a')\leq \sched(a'')$, $\sched(b')\leq \sched(b'')$ and $\sched(c')\leq \sched(c'')$.

Next, we will deduce in which order the jobs $a',a'',b',b'',c',c'',B_{i+1}$, and $B_{i+2}$ appear in the schedule.
It holds that 
$ \#_{b''} c =_{(\ref{eq:b1})} \#_{b''} A = \#_{A_i} A +1 =_{}\#_{A_i} B  =_{(\ref{eq:A1})} = \#_{A_i} c.$
Therefore, $b'$ and $b''$ have to start before $c'$. Furthermore we have
$\#_{c'} a =_{(\ref{eq:c1})} \#_{c'} b \geq \#_{A_{i}} b +2 =_{(\ref{eq:A1})} \#_{A_{i}} a +2.$
Hence, $a''$ hast to start before $c'$ as well. Additionally, it holds that
$\#_{B_{i+2}} c =_{(\ref{eq:B1})} \#_{B_{i+2}} A = \#_{A_{i}} A +1 =_{} \#_{A_{i}} B =_{(\ref{eq:A1})} \#_{A_{i}} c.$ 
As a consequence, $B_{i+2}$ has to start before $c'$. Additionally, $a''$ has to start before $B_{i+1}$, since
$\#_{a''} B =_{(\ref{eq:a1})} \#_{a''} c = \#_{A_{i}} c =_{(\ref{eq:A1})} \#_{A_{i}} B.$

To this point, we have deduced that the jobs have to appear in the following order in the schedule: $A_i, a',a'',B_{i+1},B_{i+2},c',c'',A_{i+1}$. This schedule is not feasible, since we have
$\#_{A_i} a +2 \leq_S \#_{B_{i+1}} a {\leq}_{(\ref{eq:B1})} \#_{B_{i+1}} A =_S \#_{A_i}A + 1 {=}_{(\ref{eq:A1})} \#_{A_i} a +1,$
a contradiction to the assumption $\#_{A_{i+1}} B - \#_{A_{i+1}} \{\lambda_1\} > \#_{A_{i+1}} A$. Therefore, it holds that $\#_{A_{i+1}} B - \#_{A_{i+1}} \{\lambda_1\} \leq \#_{A_{i+1}} A$

\begin{claim}
$\#_{A_{i+1}} B - \#_{A_{i+1}} \{\lambda_1\} \geq \#_{A_{i+1}} A$
\end{claim}

Assume for contradiction that $\#_{A_{i+1}} B - \#_{A_{i+1}} \{\lambda_1\} < \#_{A_{i+1}} A$. It follows that $\#_{A_{i+1}} B = \#_{A_{i}} B$ since 
$\#_{A_{i}} B - \#_{A_{i}} \{\lambda_1\}\leq \#_{A_{i+1}} B - \#_{A_{i+1}} \{\lambda_1\}\leq \#_{A_{i+1}} A -1 = \#_{A_{i}} A = \#_{A_{i}} B - \#_{A_{i}}\{\lambda_1\}.$
Furthermore, there has to be at least one job $B_{i+1} \in B$ that starts after $A_{i+1}$ since $|A| = |B|$. Therefore, we have $\#_{B_{i+1}} c - \#_{B_{i}} c =\#_{B_{i+1}} A - \#_{B_{i}} A \geq 2$. As a consequence, there are jobs $c', c'' \in c$ which are scheduled between $B_i$ and $B_{i+1}$. Let $c'$ be the first job in $c$ scheduled after $B_{i}$ ans $c''$ be the next. Since we do not know the value of $\#_{B_{i}} \{\lambda_2\}$ or $\#_{B_{i +1}} \{\lambda_2\}$, we can just deduce from equation (\ref{eq:B1}) that $\#_{B_{i+1}} a - \#_{B_{i}} a \geq 1$. Therefore, there has to be a job $a' \in a$ that is scheduled between $B_i$ and $B_{i+1}$. 

We will now look at the order in which the jobs $A_i$, $A_{i+1}$, $c'$, $c''$ and $a'$ have to be scheduled. 
First, we know that  $A_i$ and $A_{i+1}$ have to be scheduled between $c'$ and $c''$, since
$\#_{A_{i}} c =_{(\ref{eq:A1})} \#_{A_{i}} B =_S \#_{B_i} B +1 =_{(\ref{eq:AB})} \#_{B_i} A +1 =_{(\ref{eq:B1})} \#_{B_i} c +1$ 
and
$\#_{A_{i+1}} c =_{(\ref{eq:A1})} \#_{A_{i+1}} B =_S \#_{B_i} B +1 =_{(\ref{eq:AB})} \#_{B_i} A +1 =_{(\ref{eq:B1})} \#_{B_i} c +1.$ 
Furthermore, we know that $a'$ has to be scheduled between $c'$ and $c''$ as well, since
$\#_{a'} c =_{(\ref{eq:a1})} \#_{a'} B =_S \#_{B_i} B +1 =_{(\ref{eq:AB})} \#_{B_i} A +1 =_{(\ref{eq:B1})} \#_{B_i} c +1.$ 
As a consequence, we can deduce that there is a job $b' \in b$ which is scheduled between $c'$ and $c''$, since
$\#_{c''} b =_{(\ref{eq:c1})} \#_{c''} a \geq_S \#_{c'} a +1 =_{(\ref{eq:c1})} \#_{c'} b +1.$ 
We know about this $b'$ that 
$\#_{b'} A =_{(\ref{eq:b1})} \#_{b'} c =_S \#_{B_{i}} c +1 =_{(\ref{eq:B1})} \#_{B_{i}} A +1,$
so $b'$ has to be scheduled between $A_i$ and $A_{i+1}$. 

In summary, the jobs are scheduled as follows: $B_i, c', A_i, b', A_{i+1}, c'', B_{i+1}$. 
However, this schedule is infeasible since 
$\#_{A_{i}} b =_{(\ref{eq:A1})} \#_{A_{i}} B - \#_{A_{i}} \{\lambda_1\} =_{S}  \#_{A_{i+1}} B - \#_{A_{i+1}} \{\lambda_1\} =_{(\ref{eq:A1})} \#_{A_{i+1}} b =_{S}  \#_{A_{i}} b +1.$
This contradicts the assumption $\#_{A_{i+1}} B - \#_{A_{i+1}} \{\lambda_1\} < \#_{A_{i+1}} A$. Altogether, we have shown that $\#_{A_{i+1}} B - \#_{A_{i+1}} \{\lambda_1\} = \#_{A_{i+1}} A$.
\end{proof}

A direct consequence of Lemma \ref{lma:AeqB} is that the last job on $M_2$ is a job in $A$.
Since the equations (\ref{eq:A1}) and (\ref{eq:B1}), as well as (\ref{eq:a1}) and (\ref{eq:b1}), are symmetric, we can deduce an analogue statement if the first job on $M_2$ is in $A$. 
More precisely in this case we can show that $\#_iA - \#_i\{\lambda_2\} = \#_i B$ and $\#_i\{\lambda_2\} = 1$ for each $i \in B$.
This would imply that the last job on $M_2$ is a job in $B$.
Since we can mirror the schedule such that the last job is the first job, we can  suppose that the first job on $M_2$ is a job out of $B$. In this case a further direct consequence of Lemma \ref{lma:AeqB} and equation (\ref{eq:A1}) is the equation
\begin{equation}
\label{eq:A}
i = \#_{A_i} A = \#_{A_i}B - 1 = \#_{A_i} c - 1 =\#_{A_i} \alpha = \#_{A_i} b = \#_{A_i} a
\end{equation}

\begin{lemma}
$\mathcal{I}$ is a Yes-instance and we can transform the schedule such that all jobs are scheduled on continuous machines.
\end{lemma}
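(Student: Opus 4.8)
The plan is to build on the structure established in Lemmas \ref{lma:JobsOnMachines}--\ref{lma:AeqB} and equation (\ref{eq:A}). First I would observe that equation (\ref{eq:A}), together with the analogous equations for $B$, $a$, $b$, $c$, pins down the relative order of \emph{all} structure jobs on each machine: the jobs of $A$ and $B$ alternate (starting with $B_0$), and between consecutive pairs $B_i, A_i$ exactly one job of $a$, one of $b$ and one of $c$ must start, while the jobs of $\alpha$ on $M_1$ and $\beta$ on $M_4$ are forced into the matching slots by the $x_3$, $x_5$, $x_8$ coefficient identities. I would make this precise by an induction on $i$ that extracts, for each $i\in\{1,\dots,z\}$, the identity of the $i$-th job of each set and its exact start time; the start times are determined because there is no idle time and all coefficients of $D^2,\dots,D^8$ are now known. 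This shows the schedule restricted to the structure jobs is, up to the $SWAP$ operations, exactly the one in Figure \ref{fig:packingStructure}, and in particular each structure job occupies a contiguous block of machines.

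Next I would account for the partition jobs. By Lemma \ref{lma:JobsOnMachines} the structure jobs occupy $M_1$ (namely $A\cup a\cup\alpha\cup\lambda_1$), $M_4$ ($B\cup b\cup\beta\cup\lambda_2$), and parts of $M_2,M_3$; the partition jobs are $1$-processor jobs that must fill the remaining idle space. A work count shows the total work equals $4W$ and the structure jobs leave exactly $zD$ units of total free length, distributed as $z$ gaps each of length $D$ — one gap immediately after each $\gamma_j$ on whichever of $M_2,M_3$ carries $\gamma_j$ (using $p(\gamma_j)=D^5+(3z-j)D^7-D$, so the "$-D$'' is precisely what creates the gap of size $D$ before the next forced job starts). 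Each gap has length exactly $D$ and, since $D/4<\iota_i<D/2$, each gap can hold at least $2$ and at most $3$ partition jobs; with $3z$ partition jobs and $z$ gaps, every gap holds exactly $3$, and the three $\iota_i$ in a gap must sum to exactly $D$. Reading off the index sets $I_1,\dots,I_z$ from the gaps gives a valid $3$-partition, so $\mathcal{I}$ is a Yes-instance.

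For the second assertion — that the schedule can be transformed so that every job runs on contiguous machines — I would note that all structure jobs are already contiguous after the forced-order analysis (the $SWAP$ operations in Lemma \ref{lma:JobsOnMachines} preserve contiguity, or the final forced layout is literally Figure \ref{fig:packingStructure}), and the partition jobs are $1$-processor jobs, hence trivially contiguous. So the makespan-$W$ schedule is, after the transformations, a contiguous schedule, i.e. a strip packing of width $W$; this is exactly the bridge needed for Section \ref{sec:HardnessStripPacking}.

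The main obstacle I anticipate is the bookkeeping in the forced-order induction: one must verify that the coefficient identities in Table \ref{table:coefficients} together with equations (\ref{eq:A1})--(\ref{eq:c1}) and (\ref{eq:A}) genuinely leave no freedom — in particular that the jobs of $c$, $\gamma$, $\delta$ get split between $M_2$ and $M_3$ in the unique way consistent with $|\check a|=|\check\gamma|$, $|\check b|=|\check\delta|$ and with the start times matching on the two machines carrying each $c_j$. Care is also needed to confirm that the only idle space is the $z$ gaps of length exactly $D$ and that nothing prevents packing three partition jobs of total size $D$ into each; this is where the scaling assumption $D>\scalePart$ and the bound $D/4<\iota_i<D/2$ do the work, ruling out any coefficient overflow between consecutive powers of $D$ and any gap absorbing the wrong number of partition jobs.
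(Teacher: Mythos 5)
Your overall route --- force the order of the structure jobs via the coefficient identities, locate $z$ gaps of length exactly $D$ after the jobs $\gamma_j$, and let the no-idle-time condition force each gap to be filled by partition jobs summing to $D$ --- is the same as the paper's. But the two steps you set aside as ``bookkeeping'' are where the real work lies, and one of your claims is wrong as stated: contiguity of the structure jobs is \emph{not} automatic. The coefficient analysis only fixes the order of jobs on each single machine; it says nothing about which of $M_2,M_3$ carries the second processor-slot of a given $a_i$ or $b_i$. If $a_i$ runs on $M_1$ and $M_3$ it is noncontiguous, and the swaps in Lemma \ref{lma:JobsOnMachines} were used only to normalize which job sets appear on which machines, not to rule this out. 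The paper needs a dedicated argument: the window between $A_{i-1}$ and $B_i$ is too short to hold both $a_i$ and $b_i$ on the same machine, so their second parts lie on different ones of $M_2,M_3$, and since $A_i$ and $B_{i+1}$ occupy both $M_2$ and $M_3$ one can apply $SWAP$ between consecutive three-processor jobs to put every $a$-part on $M_2$ and every $b$-part on $M_3$. Only after this transformation does $\check a=a$ hold, which via $|\check a|=|\check\gamma|$ and $|\check b|=|\check\delta|$ from Lemma \ref{lma:JobsOnMachines} forces $\check\gamma=\gamma$, $\check\delta=\emptyset$, and hence puts all $\gamma$-jobs and all partition jobs on $M_2$ --- the fact your gap accounting silently relies on.

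Second, the forced-order induction on $M_4$ (and therefore the exact start times $\sched(B_i)$ you need in order to compute the gap lengths $D^5+(3z-i)D^7$ and conclude that exactly $D$ remains after $\gamma_i$) requires knowing that $\lambda_2$ starts only after the last job of $B$, i.e.\ $\#_{B_i}\{\lambda_2\}=0$ for all $i$. Otherwise the $\lambda_2$-term in equation (\ref{eq:B1}) leaves precisely the freedom you hope the identities exclude. The paper proves this by a separate contradiction argument (a forced order $B_i, b', c', a', A_i$ clashing with $\#_{c'}a=\#_{c'}b$), and nothing in your sketch supplies it. With these two ingredients added, your plan coincides with the paper's proof; without them, the assertion that ``the final forced layout is literally Figure \ref{fig:packingStructure}'' is unjustified, and the contiguity half of the lemma is exactly what remains to be shown.
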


\begin{proof} 
First, we will show that $\lambda_2$ is scheduled after the last job in $B$.
Assume there is an $i \in \{0,\dots, z\}$ with $\#_{B_i} \{\lambda_2\} > 0$. Let $i$ be the smallest of these indices. 
We know that 
\[i -1 =_{(\ref{eq:AB})}\#_{B_{i}} A - 1 = \#_{B_{i}} A - \#_{B_{i}} \{\lambda_2\}=_{(\ref{eq:B1})} \#_{B_{i}} a.\]
Since $\#_{A_{i}} b =_{(\ref{eq:A1})} \#_{A_{i}} a =_{(\ref{eq:A})} i = \#_{B_{i}} a + 1 =_{(\ref{eq:B1})} \#_{B_{i}} b + 1$ there has to be an unique $a' \in a$ and an unique $b'\in b$ scheduled between $B_i$ and $A_i$. 
Furthermore, since $\#_{A_i} c =_{(\ref{eq:A})} i+1$ and $\#_{B_i} c =_{(\ref{eq:AB})} i$, there has to be a $c' \in c$ scheduled between $B_i$ and $A_i$ as well. 
At the start of $b'$ it holds that $\#_{b'} c =_{(\ref{eq:b1})} \#_{b'} A = \#_{A_{i-1}}A +1 =_{(\ref{eq:A1})} \#_{A_{i-1}}c $, so $b'$ has to start before $c'$. Additionally, at the start of $a'$ we have $\#_{a'} c =_{(\ref{eq:b1})} \#_{a'} B = \#_{B_{i}}B +1 =_{(\ref{eq:AB})} \#_{B_{i}}c +1$ and therefore $a'$ hast to start after $c'$. 
In total, the jobs appear in the following order: $B_i, b',c',a', A_i$. But this can not be the case, since we have $\#_{B_{i-1}} a =_S \#_{c'} a =_{(\ref{eq:c1})} \#_{c'} b =_S \#_{B_{i-1}} b +1 =\#_{B_{i-1}} a +1$. 
Hence, we have contradicted that assumption.
As a consequence, we have $\#_{B_i} \{\lambda_2\} = 0$ for all $i \in \{0,\dots, z\}$ and therefore
\begin{equation}
\label{eq:B}
\#_{B_i} b =\#_{B_i} a = \#_{B_i} c = \#_{B_i} \beta =  \#_{B_i} A = \#_{B_i} B = i. 
\end{equation} 

In the next step, we will prove that $M_1$ processes the jobs $A \cup a \cup \alpha \cup \{\lambda_1\}$ in the order $\lambda_1, A_0, a_1, \alpha_1, A_1, a_2, \alpha_2, A_2, \dots, a_z, \alpha_z, A_z$, where $a_i\in a$ and $\alpha_i \in \alpha$ for each $i \in \{1,\dots z\}$.
Equation (\ref{eq:A}) and Lemma \ref{lma:AeqB} ensure that the first job on $M_1$ is the job $\lambda_1$ and the second job is $A_0$. For each $i \in \{1,\dots,z\}$ it holds that $\#_{A_{i}} \alpha =_{(\ref{eq:A})} \#_{A_{i-1}} \alpha +1$ and $ \#_{A_{i}} a =_{(\ref{eq:A})}\#_{A_{i-1}} a +1$.
Therefore, there is scheduled exactly one job $a_i \in a$ and one job $\alpha_i \in \alpha$ between the jobs $A_{i-1}$ and $A_{i}$.
It holds that $\#_{A_{i-1}} a + 1 =_{(\ref{eq:A})} i =_{(\ref{eq:B})} \#_{B_{i}} a$. Therefore, $a_i$ has to be scheduled between $A_{i-1}$ and $B_{i}$. 
As a consequence, we have
$\#_{a_i}\alpha + 1 = \#_{a_i}\alpha + \#_{a_i}\{\lambda_1\} =_{(\ref{eq:a1})} \#_{a_i}B = \#_{B_{i}}B =_{(\ref{eq:B})} \#_{B_{i}}a = \#_{a'}a +1.$
Therefore, $a_i$ has to be scheduled before $\alpha_i$ and the jobs appear in machine $M_1$ in the described order. As a result, we know about the start point of $A_i$ that 
\begin{align*}
\sched(A_i) &= p(\lambda_1) + i\cdot p_{a} + i \cdot p_{\alpha} + i \cdot p_{A} \\
&= D^3 + zD^7 + D^8 + i(D^4 + D^6 + 3zD^7) + i(D^3 +D^5 + 4zD^7 + D^8) + iD^2\\
& = iD^2 + (i+1)D^3 + iD^4 + iD^5 + iD^6 + (7zi+z)D^7+ (i+1)D^8.
\end{align*}

Now, we will show, that the machine $M_4$ processes the jobs $B \cup b \cup \beta \cup \{\lambda_2\}$ in the order $B_0,\beta_1,b_1,B_1, \beta_2, b_2, B_2, \dots, \beta_z, b_z, B_z, \lambda_2$, where $b_i \in b$ and $\beta_i \in \beta$ for each $i \in \{1,\dots z\}$.
The first job on $M_4$ is the job $B_0$. 
Equation (\ref{eq:B}) ensures that between the jobs $B_i$ and $B_{i+1}$ there is scheduled exactly one job $b_{i+1} \in b$ and exactly one job $\beta_{i+1} \in \beta$. It holds that $\#_{A_{i}} b + 1 =_{(\ref{eq:A})} i+1 =_{(\ref{eq:B})} \#_{B_{i+1}} b$. Therefore, $b_{i+1}$ has to be scheduled between $A_{i}$ and $B_{i+1}$.
As a consequence, it holds that
$\#_{b_{i+1}}\beta = \#_{b_{i+1}}\beta + \#_{b_{i+1}}\{\lambda_2\} = \#_{b_{i+1}}A = \#_{B_{i+1}}A = \#_{B_{i+1}}b = \#_{b_{i+1}}b +1.$
Hence, $b_{i+1}$ has to be scheduled after $\beta_{i+1}$ and the jobs on machine $M_4$ appear in the described order. As a result, we know about the start point of $B_i$ that 
\begin{align*}
\sched(B_i) &= i p_{b} + i p_{\beta} + ip_B\\
& = iD^2 + iD^3 + iD^4 + iD^5 + iD^6 + (i(7z-1))D^7+ iD^8.
\end{align*}

Next, we can deduce, that the jobs in $c$ are scheduled as shown in Figure \ref{fig:packingStructure}. 
We have $\#_{B_i} c =_{(\ref{eq:B})} i  =_{(\ref{eq:A})} \#_{A_i} c -1$.
Therefore, there exists an $c' \in c$ for each $i \in \{0,\dots, z\}$, which is scheduled between $B_i$ and $A_i$. The processing time between $B_i$ and $A_i$ is exactly $\sched(A_i) - \sched(B_i) - p(B_i)= (z+i)D^7 + D^8$.
As a consequence, one can see with an inductive argument that $c_i \in c$ with $p(c_i) = (z+i)D^7 + D^8$ has to be positioned between $B_i$ and $A_i$, since the job in $c$ with the largest processing time $c_z$ only fits between $B_z$ and $A_z$.

In this step, we will transform the schedule, such that all jobs are scheduled on continuous machines. 
To this point, this property is obviously fulfilled by the jobs in $A \cup B \cup c$. 
However, the jobs in $a \cup b$ might be scheduled on nonconsecutive machines. 
We know that the $a_i$ and $b_i$ are scheduled between $A_{i-1}$ and $B_{i}$. 
One part of $a_i$ is scheduled on $M_1$ and one part of $b_i$ is scheduled on $M_4$, while each second part is scheduled either on $M_2$ or on $M_3$ but both parts on different machines, because $\sched(B_i) - \sched(A_{i-1}) - p(A_i) = D^4 +D^5 + D^6 + (6z-i)D^7 < D^4 + D^5 + 2D^6 + 6zD^7 = p(a_i) +p(b_i)$ for each $i \in \{0,\dots,z\}$.
Since $A_i$ and $B_{i+1}$ both are scheduled on machines $M_2$ and $M_3$, we can swap the content of the machines between these jobs such that the second part of $a_i$ is scheduled on $M_2$ and the second part of $b_i$ is scheduled on $M_3$.
We do this swapping step for all $i \in \{0,\dots,z-1\}$ such that all second parts of jobs in $a$ are scheduled on $M_2$ and all second part of jobs in $b$ are scheduled on $M_3$ respectively. After this swapping step, all jobs are scheduled on continuous machines. 

Now, we will show that $\mathcal{I}$ is a yes-instance.
To this point we know that $M_2$ contains the jobs $A\cup B \cup a \cup c$. 
Since $\check{a} = a$, it has to hold by Lemma \ref{lma:JobsOnMachines}, using $|\check{a}| = |\check{\gamma}|$, that $\check{\gamma} = \gamma$ implying that $M_2$ contains all jobs in $\gamma$. 
Furthermore, since $\check{b} = \emptyset$ and $|\check{b}| = |\check{\delta}|$, we have $\check{\delta} = \emptyset$ and therefore $M_2$ does not contain any job in $\delta$.
Besides the jobs $A\cup B \cup a \cup c \cup \gamma$, $M_2$ processes further jobs with total processing time $zD$. Therefore, all the jobs in $P$ are processed on $M_2$. We will now analyse where the jobs in $\gamma$ are scheduled.  
The only possibility where these jobs can be scheduled is the time between $a_i$ and $B_i$ for each $i \in \{1, \dots,z\}$ since at each other time the machine is occupied by other jobs. 
The processing time between the end of $a_i$ and the start of $B_i$ is exactly $\sched(B_i) - \sched(A_{i-1}) - p(A_{i-1}) - p(a_i)= D^5 + (3z-i)D^7$. The job in $\gamma$ with the largest processing time is the job $\gamma_1$ with $p(\gamma_1) = D^5+ (3z-1)D^7 -D$. 
This job only fits between $a_i$ and $B_1$.
Inductively we can show that $\gamma_i \in \gamma$ with $p(\gamma_i) = D^5+ (3z-i)D^7 -D$ has to be scheduled between $a_{i}$ and $B_{i}$ on $M_2$. 
Furthermore, since $p(\gamma_i) = D^5+ (3z-i)D^7 -D$ and the processing time between the end of $a_i$ and the start of $B_{i}$ is $D^5 + (3z-i)D^7$, there is exactly $D$ processing time left. 
These processing time has to be occupied by the jobs in $P$ since this schedule has no idle times. Therefore, we have for each $i \in \{1,\dots, z\}$ a disjunct subset $P_i \subseteq P$ containing jobs with processing times adding up to $D$. As a consequence $\mathcal{I}$ is a Yes-instance.
\end{proof}

\section{Hardness of Strip Packing}
\label{sec:HardnessStripPacking}

In the transformed schedule, all jobs are scheduled on contiguous machines. 
As a consequence, we have proven that this problem is strongly $NP$-complete even if we restrict the set of feasible solutions to those where all jobs are scheduled on continuous machines. 
We will now describe how this insight delivers a lower bound of $\frac{5}{4}$ for the best possible approximation ratio for pseudo-polynomial Strip Packing and in this way prove Theorem \ref{thm:StripPackingHardness}.

To show our hardness result for Strip Packing, let us consider the following instance. 
We define $W := (z+1)(D^2 + D^3 +D^8) + z(D^4 + D^5 +D^6) + z(7z + 1)D^7$ as the width of the considered strip, so it is the same as the considered makespan in the scheduling problem. 
For each job $j$ defined in the reduction above, we define an item i with $w(i) = p(j)$ and height $h(i) = q(j)$. 
Now, we can show analogously that if the 3-Partition instance is a Yes-instance there is a packing of height 4 (one example is the packing in Figure \ref{fig:packingStructure}) and if there is a packing with height $4$ then the 3-Partition instance has to be a Yes-instance. If the 3-Partition instance is a No-instance, the optimal packing has a height of at least 5 since the optimal height for this instance is integral. 
Therefore, we can not approximate Strip Packing in pseudo-polynomial time better than $\frac{5}{4}$.

\section{Conclusion}
In this paper, we positively answered the long standing open question whether $P4|size_j|C_{\max}$ is strongly $NP$-complete. Now, for each number of machines $m$ it is known whether the problem $Pm|size_j|C_{\max}$ is strongly $NP$-complete. Furthermore, we have improved the lower bound for pseudo-polynomial Strip Packing to $\frac{5}{4}$. Since the best known algorithm has an approximation ratio of $\frac{4}{3}$, this still leaves a gap between the lower bound and the best known algorithm. 
With the techniques used in this paper, a lower bound of $\frac{4}{3}$ for pseudo-polynomial Strip Packing can not be proven, since $P3|size_j|C_{\max}$ is solvable in pseudo-polynomial time and in the generated solutions all jobs are scheduled contiguously. Moreover, we believe that it is possible to find an algorithm with approximation ratio $\frac{5}{4} +\varepsilon$.
\bibliography{lowerBound}

\end{document}